\newtheorem{theorem}{Theorem}
\newtheorem{lemma}{Lemma}
\newtheorem{observation}{Observation}
\theoremstyle{definition}
\newtheorem{example}{Example}
\newcommand{\nil}{\mathit{nil}}
\newcommand{\dif}{\mathit{dif}}
\newcommand{\suf}{\mathit{suf}}
\newcommand{\link}{\mathsf{link}}
\newcommand{\serieslink}{\mathsf{serieslink}}
\newcommand{\paltree}{\mathsf{paltree}}
\newcommand{\seriestree}{\mathsf{seriestree}}
\newcommand{\LSufPalArray}{\mathsf{LSufPal}}
\newcommand{\LPrePalArray}{\mathsf{LPrePal}}
\newcommand{\waq}{\mathsf{WAQ}}
\newcommand{\weight}{\mathsf{weight}}
\newcommand{\RMQ}{\mathsf{RMQ}}
\newcommand{\MP}{\mathsf{MP}}
\newcommand{\order}{\mathcal{O}}
\newcommand{\TopLPal}{\mathsf{TopLPal}}
\newcommand{\LPP}{\mathit{LPP}}
\newcommand{\LPS}{\mathit{LPS}}
\begin{document}

\title{Finding Top-$k$ Longest Palindromes in Substrings}

\author[1]{Kazuki Mitani}
\author[2]{Takuya Mieno}
\author[3]{Kazuhisa Seto}
\author[3]{Takashi Horiyama}

\affil[1]{Graduate School of Information Science and Technology, Hokkaido University, Sapporo, Japan}
\affil[2]{Department of Computer and Network Engineering, University of Electro-Communications, Chofu, Japan}
\affil[3]{Faculty of Information Science and Technology, Hokkaido University, Sapporo, Japan}

\date{}

\maketitle

\begin{abstract}
  Palindromes are strings that read the same forward and backward.
  Problems of computing palindromic structures in strings have been studied for many years with a motivation of their application to biology.
  The longest palindrome problem is one of the most important and classical problems regarding palindromic structures,
  that is, to compute the longest palindrome appearing in a string $T$ of length $n$.
  The problem can be solved in $\order(n)$ time by the famous algorithm of Manacher [Journal of the ACM, 1975].
  This paper generalizes the longest palindrome problem to the problem of finding top-$k$ longest palindromes in an arbitrary substring, including the input string $T$ itself.
  The internal top-$k$ longest palindrome query is, given a substring $T[i..j]$ of $T$ and a positive integer $k$ as a query, to compute the top-$k$ longest palindromes appearing in $T[i.. j]$.
  This paper proposes a linear-size data structure that can answer internal top-$k$ longest palindromes query in optimal $O(k)$ time.
  Also, given the input string $T$, our data structure can be constructed in $\order(n\log n)$ time.
  For $k = 1$, the construction time is reduced to $\order(n)$.
\end{abstract}

\section{Introduction}
Palindromes are strings that read the same backward as forward.
Palindromes have been widely studied with the motivation of their application to biology~\cite{gusfield1997algorithms}.
Computing and counting palindromes in a string are fundamental tasks.
Manacher~\cite{manacher1975new} proposed an $\order(n)$-time algorithm that computes all maximal palindromes in the string of length $n$.
Droubay et al.~\cite{droubay2001episturmian} showed that any string of length $n$ contains at most $n+1$ distinct palindromes (including the empty string).
Then, Groult et al.~\cite{groult2010counting} proposed an $\order(n)$-time algorithm to enumerate the number of distinct palindromes in a string.
The above $\order(n)$-time algorithms are time-optimal
since reading the input string of length $n$ takes $\Omega(n)$ time.

Regarding the longest palindrome computation,
Funakoshi et al.~\cite{funakoshi2021computing} considered the problem of computing the longest palindromic substring of the string $T'$ after a single character insertion, deletion, or substitution is applied to the input string $T$ of length $n$.
Of course, using $\order(n)$ time, we can obtain the longest palindromic substring of $T'$ from scratch.
However, this idea is na\"ive and appears to be inefficient.
To avoid such inefficiency, Funakoshi et al.~\cite{funakoshi2021computing} proposed an $\order(n)$-space data structure
that can compute the solution for any editing operation given as a query
in $\order(\log( \min \{\sigma,\log n\}))$ time
where $\sigma$ is the alphabet size.
Amir et al.~\cite{amir2020dynamic} considered the dynamic longest palindromic substring problem, which is an extension of Funakoshi et al.'s  problem where up to $\order(n)$ sequential editing operations are allowed.
They proposed an algorithm that solves this problem in $\order(\sqrt{n} \log^2 n)$ time per a single character edit w.h.p. with a data structure of size $\order(n \log n)$, which can be constructed in $\order(n \log^2 n)$ time.
Furthermore, Amir and Boneh~\cite{amir2019dynamic} proposed an algorithm running in poly-logarithmic time per a single character substitution.

Internal queries are queries about substrings of the input string $T$.
Let us consider a situation where we solve a certain problem for each of $k$ different substrings of $T$. 
If we run an $\order(|w|)$-time algorithm from scratch for each substring $w$, the total time complexity can be as large as $\order(kn)$.
To be more efficient, by performing an appropriate preprocessing on $T$, we construct some data structure for the query to output each solution efficiently.
Such efficient data structures for palindromic problems are known.
Rubinchik and Shur~\cite{rubinchik2017counting} proposed an algorithm that
computes the number of distinct palindromes in a given substring of an input string of length $n$.
Their algorithm runs in $\order(\log n)$ time
with a data structure of size $\order(n \log n)$, which can be constructed in $\order(n \log n)$ time.
Amir et al.~\cite{amir2020dynamic} considered a problem of 
computing the longest palindromic substring in a given substring of the input string of length $n$; it is called the internal longest palindrome query.
Their algorithm runs in $\order(\log n)$ time
with a data structure of size $\order(n \log n)$, which can be constructed in $\order(n \log^2 n)$ time.

This paper proposes a new algorithm for the internal longest palindrome query. The algorithm of Amir et al.~\cite{amir2020dynamic} uses 2-dimensional orthogonal range maximum queries~\cite{agarwal2017range,alstrup2000,bentley1980multidimensional};
furthermore, the time and space complexities of their algorithm are dominated by this query.
Instead of 2-dimensional orthogonal range maximum queries, by using palindromic trees~\cite{rubinchik2018eertree}, weighted ancestor queries~\cite{ganardi2021compression}, and range maximum queries~\cite{fischer2011space}, we obtain a time-optimal algorithm. 

\begin{restatable}{theorem}{maintheorem}
\label{theorem:conclusion}
  Given a string $T$ of length $n$ over a linearly sortable alphabet,
  we can construct a data structure of size $\order(n)$ in $\order(n)$ time that can answer any 
  internal longest palindrome query in $\order(1)$ time.
\end{restatable}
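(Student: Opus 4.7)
The plan is to show that the longest palindrome of $T[i..j]$ is always one of three candidates: (a) a maximal palindrome of $T$ that is fully contained in $[i..j]$; (b) the longest palindromic prefix of $T[i..j]$; or (c) the longest palindromic suffix of $T[i..j]$. This trichotomy holds because if the answer $P$ is not maximal in $T$, then the one-letter extension of $P$ on both sides is still a palindrome, which cannot fit in $[i..j]$, and so $P$ must touch $i$ or $j$. I would build subroutines for (a), (b), (c) independently and return the maximum.

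\textbf{Handling (b) and (c).} I would build the palindromic tree $\paltree(T)$ and record $\sufpal(k)$ for every $k$, all in $\order(n)$ time. In the suffix-link tree of $\paltree(T)$, the ancestors of $\sufpal(k)$ are exactly the palindromic suffixes of $T[1..k]$, whose lengths form strictly increasing weights as we descend from the root. Preprocessing this tree for weighted ancestor queries in $\order(1)$ time (Ganardi et al.) lets me answer the longest palindromic suffix of $T[i..j]$ as $\waq(\sufpal(j),\,j-i+1)$, i.e., the deepest ancestor of $\sufpal(j)$ whose weight is at most $j-i+1$. Case (b) is symmetric, using $\paltree(T^R)$ (or an equivalent prefix-pointer structure).

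\textbf{Handling (a): the main obstacle.} Here the task reduces to computing $\max\{\MP[c] : c-r_c \ge i,\ c+r_c \le j\}$, where $\MP$ is Manacher's array (built in $\order(n)$ time) and $r_c=(\MP[c]-1)/2$. Taken directly, this is a $2$-dimensional dominance-max query, which the approach of Amir et al. answers in $\order(\log n)$ time. To reach $\order(1)$, I would exploit the observation, stemming from (b), (c), that every maximal palindrome of $T$ which overhangs either endpoint already contributes its truncation to case (b) or (c), so such centers can be ignored for (a). The proof obligation is then to show that a constant number of 1D $\RMQ$s on appropriately derived arrays, coupled with a constant number of $\waq$s on $\paltree$, pinpoint the longest \emph{fully contained} maximal palindrome---intuitively, $\RMQ$ on $\MP$ restricted to centers that sit deep enough inside $[i..j]$, while $\waq$s localize the extremal candidates straddling the boundary. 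This reduction from 2D dominance-max to $\order(1)$-many 1D queries is the hard step, and it is where the novelty of the paper has to live.

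\textbf{Putting it together.} Taking the maximum over the three candidates answers any internal longest palindrome query in $\order(1)$ time. All components---$\paltree(T)$ and $\paltree(T^R)$, the $\sufpal$ array, Manacher's $\MP$ array, the weighted ancestor structures on both suffix-link trees, and the Fischer--Heun $\RMQ$ structures---are constructible in $\order(n)$ time and occupy $\order(n)$ space for any linearly sortable alphabet, completing the theorem.
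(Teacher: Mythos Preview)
Your trichotomy (prefix/infix/suffix) matches the paper's, but there are two genuine gaps.

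\textbf{The prefix/suffix subroutine.} You propose to answer $\waq(\sufpal(j),\,j-i+1)$ on the \emph{suffix-link tree} of $\paltree(T)$ and cite Ganardi et~al.\ for $\order(1)$-time weighted ancestor queries. That result (the paper's Theorem~\ref{theorem:WAQ}) requires the tree to have height $\order(\omega)$. The suffix-link tree of $\paltree(T)$ can have height $\Theta(n)$ (take $T=a^n$: the palindromic suffixes have lengths $0,1,\ldots,n$ and the suffix links form a path). So Ganardi's bound does not apply, and the best generic bound you would get is $\order(\log\log n)$ per query, not $\order(1)$. The paper's fix is exactly here: it replaces the suffix-link tree by the \emph{series tree} $\seriestree(T)$, whose edges are series links and whose height is $\order(\log n)\subseteq\order(\omega)$. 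A WAQ on $\seriestree(T)$ then returns, in $\order(1)$ time, the arithmetic progression $\suf_\beta$ that contains the target length, and one more constant-time step picks the largest term of $\suf_\beta$ not exceeding $j-i+1$ (Lemma~\ref{lemma:suffix}). Your proposal is missing this structural observation.

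\textbf{The infix subroutine.} You correctly identify this as the crux but stop at ``intuitively, $\RMQ$ on $\MP$ restricted to centers that sit deep enough inside $[i..j]$''. The paper makes this precise with two short lemmas. Let $T[i..t]$ be the longest palindromic prefix and $T[s..j]$ the longest palindromic suffix of $T[i..j]$ (already computed above). Lemma~\ref{lemma:infix1} shows that any palindrome in $T[i..j]$ whose center lies outside the open interval $\bigl(\tfrac{i+t}{2},\tfrac{s+j}{2}\bigr)$ is no longer than the prefix or suffix candidate. Lemma~\ref{lemma:infix2} shows that \emph{every} palindromic substring of $T$ whose center lies strictly inside that interval is automatically an infix of $T[i..j]$ (it cannot reach $i$ or $j$, or it would contradict maximality of $T[i..t]$ or $T[s..j]$). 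Together these collapse the 2D dominance-max problem to a \emph{single} 1D range-max on Manacher's array over the index range $(i+t,\,s+j-2)$. No additional WAQs or boundary-straddling case analysis is needed; your sketch over-complicates this step while leaving the key containment argument (Lemma~\ref{lemma:infix2}) unstated.
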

\noindent
Here, an alphabet is said to be \emph{linearly sortable}
if any sequence of $n$ characters from $\Sigma$ can be sorted in $\order(n)$ time.
For example, the integer alphabet $\{1, 2, \ldots, n^c\}$ for some constant $c$ is linearly sortable
because we can sort a sequence from the alphabet in linear time by using a radix sort with base $n$.
We also assume the word-RAM model with word size $\omega \ge \log n$ bits for input size $n$.

Furthermore, we consider a more general problem of finding palindromes,
i.e., the problem of finding \emph{top-$k$} longest palindromes in a substring of $T$,
rather than just the longest palindrome in $T$.
Then, we finally prove the following proposition,
which will be given as a corollary in Section~\ref{sec:topk}.
\begin{restatable}{corollary}{maincoro}
\label{cor:main}
  Given a string $T$ of length $n$,
  we can construct a data structure of size $\order(n)$ in $\order(n\log n)$ time
  that can answer any 
  internal top-$k$ longest palindrome query in $\order(k)$ time.
\end{restatable}
Our results are summarized in Table~\ref{tab:results}.
\begin{table}[tbh]
  \begin{center}
  \begin{tabular}{r|c|c|c}
  \toprule
  &  &  longest palindrome &  top-$k$ palindromes \\
  \hline
  \hline
  String $T$ &  preprocessing & $\order(n)$ time~\cite{manacher1975new} & $\order(n)$ time\\
  \cmidrule{2-4}
  & query & $\order(1)$ time & $\order(k)$ time \\
  \hline
  Query substring &  preprocessing & $\order(n)$ time & $\order(n\log n)$ time\\
  \cmidrule{2-4}
  $T[i.. j]$ & query & $\order(1)$ time & $\order(k)$ time\\
  \bottomrule
  \end{tabular}
  \label{tab:results}
  \caption{
    Manacher's algorithm~\cite{manacher1975new} can compute the longest palindrome in a string $T$ in linear time.
    We study three generalized problems and give efficient data structures and algorithms.
    All the proposed data structures are of linear size.
  }
  \end{center}
\end{table}

\paragraph*{\bf Related Work}
Internal queries have been studied on many problems, not only those related to palindromic structures.
For instance, Kociumaka et al.~\cite{kociumaka2014internal} considered the internal pattern matching queries that are ones for computing the occurrences of a substring $U$ of the input string $T$ in another substring $V$ of $T$.
Besides, internal queries for string alignment~\cite{charalampopoulos2021almost,sakai2019substring,sakai2022data,tiskin2008semi}, longest common prefix \cite{abedin2020linear,amir2014range,ganguly2018linear,matsuda2020compressed}, and longest common substring~\cite{amir2020dynamic} have been studied in the last two decades.
See~\cite{kociumaka2019efficient} for an overview of internal queries.
We also refer to \cite{rangeSUS,babenko2016computing,badkobeh2022internal,charalampopoulos2021internal,charalampopoulos2020faster,kociumaka2016minimal} and references therein.

\paragraph*{\bf Paper Organization}
The rest of this paper is organized as follows.
Section~\ref{sec:pre} gives some notations and definitions.
Section~\ref{sec:algo} shows our data structure to solve the internal longest palindrome queries.
Section~\ref{sec:topk} shows how to compute the top-$k$ longest palindromes in (sub)strings.
Finally, Section~\ref{sec:conc} concludes this paper.

\section{Preliminaries}\label{sec:pre}
\subsection{Strings and Palindromes}
Let $\Sigma$ be an alphabet.
An element of $\Sigma$ is called a character,
and an element of $\Sigma^*$ is called a string.
The empty string $\varepsilon$ is the string of length $0$.
The length of a string $T$ is denoted by $|T|$.
For each $i$ with $1 \le i \le |T|$,
the $i$-th character of $T$ is denoted by $T[i]$.
For each $i$ and $j$ with $1 \le i, j \le |T|$,
the string $T[i]T[i+1] \cdots T[j]$ is denoted by $T[i..j]$.
For convenience, let $T[i'.. j'] = \varepsilon$ if $i' > j'$.
If $T=xyz$, then $x$, $y$, and $z$ are called a prefix, substring, and suffix of $T$, respectively.
They are called a proper prefix, a proper substring, and a proper suffix of $T$ if $x \neq T$, $y \neq T$, and $z \neq T$, respectively.
The string $y$ is called an infix of $T$ if $x \neq \varepsilon$ and $z \neq \varepsilon$.
The reversal of string $T$ is denoted by $T^R$, i.e., $T^R=T[|T|] \cdots T[2]T[1]$.
A string $T$ is called a palindrome if $T=T^R$. Note that $\varepsilon$ is also a palindrome.
For a palindromic substring $T[i.. j]$ of $T$,
the center of $T[i.. j]$ is $\frac{i+j}{2}$.
A palindromic substring $T[i..j]$ is called a maximal palindrome in $T$
if $i=1$, $j=|T|$, or $T[i-1] \neq T[j+1]$.
In what follows, we consider an arbitrary fixed string $T$ of length $n > 0$.
In this paper, we assume that the alphabet $\Sigma$ is linearly sortable.
We also assume the word-RAM model with word size $\omega \ge \log n$ bits.

Let $z$ be the number of palindromic suffixes of $T$. 
Let $\suf(T) = (s_1, s_2, \ldots,$
$s_{z})$ be the sequence of the lengths of palindromic suffixes of $T$ sorted in increasing order.
Further let $\dif_i = s_i - s_{i-1}$ for each $i$ with $2 \le i \le z$.
For convenience, let $\dif_1=0$.
Then, the sequence $(\dif_1, \ldots, \dif_z)$ is monotonically non-decreasing~(Lemma 7 in \cite{matsubara2009efficient}).
Let $(\suf_1, \suf_2, \ldots ,\suf_p)$ be the partition of $\suf(T)$
such that for any two elements $s_i, s_j$ in $\suf(T)$, 
$s_i, s_j \in \suf_k$ for some $k$ iff $\dif_i = \dif_j$.
By definition, each $\suf_k$ forms an arithmetic progression.
It is known that the number $p$ of arithmetic progressions satisfies $p \in \order(\log n)$~\cite{APOSTOLICO1995163,matsubara2009efficient}.
For $1 \leq k \leq p$ and $1 \leq \ell \leq |\suf_k|$, $\suf_{k,\ell}$ denote the $\ell$-th term of $\suf_k$.
Figure~\ref{fig:progression} shows an example of the above definitions. 
\begin{figure}[tbp]
    \centering
    \includegraphics[width=\linewidth]{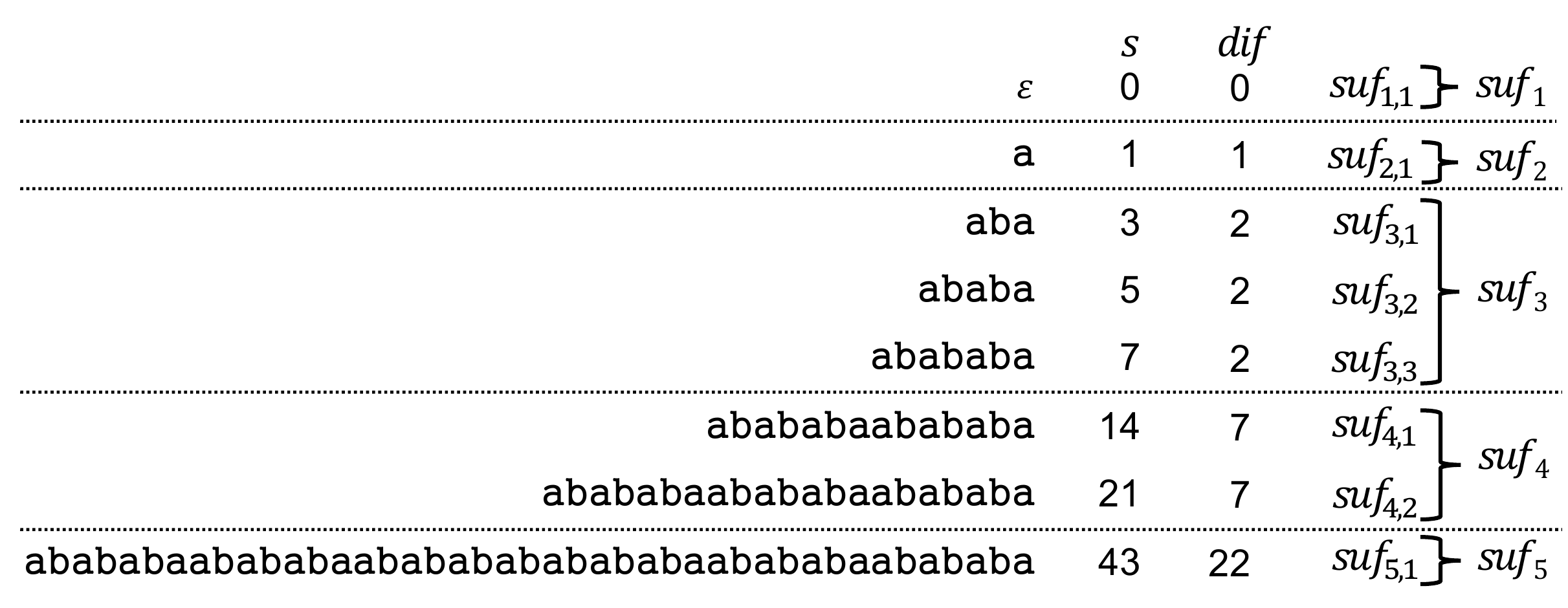}
    \caption{
    Palindromic suffixes of a string
    and the partition $(\suf_1, \ldots, \suf_5)$ of their lengths.
    Three integers $s_3 = 3, s_4 = 5$, and $s_5 = 7$ are represented by a single arithmetic progression $\suf_3$ 
    since $\dif_3=\dif_4=\dif_5=2$.
    Since $s_4$ is the second smallest term in $\suf_3$, $\suf_{3,2} = s_4$.
    }
    \label{fig:progression}
\end{figure}

\subsection{Tools}
In this section, we list some data structures used in our algorithm in Section~\ref{sec:algo}.
\paragraph*{Palindromic Trees and Series Trees}
The palindromic tree of $T$ is a data structure that represents all distinct palindromes in $T$~\cite{rubinchik2018eertree}.
The palindromic tree of $T$, denoted by $\paltree(T)$, has $d$ ordinary nodes and one auxiliary node $\bot$ where $d \le n+1$ is the number of all distinct palindromes in $T$.
Each ordinary node $v$ corresponds to a palindromic substring of $T$ (including the empty string $\varepsilon$) and stores its length as $\weight(v)$.
For the auxiliary node $\bot$, we define $\weight(\bot)=-1$.
For convenience, we identify each node with its corresponding palindrome.
For an ordinary node $v$ in $\paltree(T)$ and a character $c$,
if nodes $v$ and $cvc$ exist, then an edge labeled $c$ connects these nodes.
The auxiliary node $\bot$ has edges to all nodes corresponding to length-$1$ palindromes.
Each node $v$ in $\paltree(T)$ has a suffix link that points to the longest palindromic proper suffix of $v$.
Let $\link(v)$ be the string pointed to by the suffix link of $v$.
We define
$\link(\varepsilon) = \link(\bot) = \bot$.
See Figure~\ref{fig:eertree}(a) for example.
For each node $v$ corresponding to a non-empty palindrome in $\paltree(T)$, let $\delta_v=|v|-|\link(v)|$ be the difference between the lengths of $v$ and its longest palindromic proper suffix.
For convenience, let $\delta_{\varepsilon} = 0$.
Each node $v$ corresponding to a non-empty palindrome has a series link that points to the longest palindromic proper suffix $u$ of $v$ such that $\delta_u \neq \delta_v$.
Let $\serieslink(v)$ be the string pointed to by the series link of $v$.

Let $\LSufPalArray$ be an array of length $n$
such that $\LSufPalArray[j]$ stores a pointer to the node in $\paltree(T)$ corresponding to the longest palindromic suffix of $T[1..j]$ for each $1 \le j \le n$.
The definition of $\LSufPalArray$ is identical to the array $\mathsf{node}[1]$ defined in~\cite{rubinchik2018eertree},
and it was shown that $\mathsf{node}[1]$ can be computed in $\order(n)$ time.
Hence, $\LSufPalArray$ can be computed in $\order(n)$ time.
Let $\LPrePalArray$ be an array of length $n$
such that $\LPrePalArray[i]$ stores a pointer to the node in $\paltree(T)$ corresponding to the longest palindromic prefix of $T[i..n]$ for each $1 \le i \le n$.
$\LPrePalArray$ can be computed in $\order(n)$ time as well as $\LSufPalArray$.
\begin{theorem}[Proposition 4.10 in~\cite{rubinchik2018eertree}]
\label{theorem:paltree}
  Given a string $T$ over a linearly sortable alphabet,
  the palindromic tree of $T$, including its suffix links and series links,
  can be constructed in $\order(n)$ time.
  Also, $\LSufPalArray$ and $\LPrePalArray$ can be computed in $\order(n)$ time.
\end{theorem}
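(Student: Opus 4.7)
The plan is to carry out the incremental online construction of Rubinchik and Shur, simultaneously maintaining $\LSufPalArray$ and the series links during the same left-to-right scan. For each $j$ I maintain a pointer $\ell_j$ to the longest palindromic suffix of $T[1..j]$. To extend by $c = T[j+1]$, note that the longest palindromic suffix of $T[1..j+1]$ has the form $cuc$, where $u$ is the longest palindromic suffix of $T[1..j]$ with $T[j - |u|] = c$ (with $u = \bot$ as a sentinel when $|u|=-1$). I would find $u$ by walking the suffix-link chain starting at $\ell_j$ until this extension condition holds; if the outgoing edge of $u$ labelled $c$ is absent, I create a new node $cuc$ with $\weight(cuc) = |u|+2$, install the $c$-edge from $u$, and compute $\link(cuc)$ by a second suffix-link walk starting at $\link(u)$. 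Setting $\ell_{j+1}$ to the resulting node and writing $\LSufPalArray[j+1] := \ell_{j+1}$ finishes step $j+1$.

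Summed over all $j$, the total number of suffix-link steps taken is $\order(n)$ by the familiar potential argument: each step strictly decreases $|\ell|$, while $|\ell|$ grows by at most $2$ per appended character, so telescoping yields an $\order(n)$ bound. Series links cost no extra asymptotic time: immediately after creating $v$, set $\serieslink(v) := \link(v)$ if $\delta_{\link(v)} \ne \delta_v$, and $\serieslink(v) := \serieslink(\link(v))$ otherwise, using the series link already attached to the parent, so this is $\order(1)$ per new node. The array $\LSufPalArray$ is literally the sequence of pointers $\ell_1, \ldots, \ell_n$ produced by the scan, incurring no overhead.

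The main obstacle is realising each suffix-link step in $\order(1)$ worst-case time, which reduces to supporting the child query ``does node $v$ have an outgoing edge labelled $c$?'' in constant time. I would first perform a single $\order(n)$-time sort of the characters of $T$, which is legitimate because $\Sigma$ is linearly sortable, thereby remapping the effective alphabet to $\{1, \ldots, \sigma'\}$ with $\sigma' \le n$. Since $\paltree(T)$ has at most $n+1$ ordinary nodes and each edge is created exactly once, the full collection of child pointers amounts to at most $n$ key--value pairs over keys in $\{1, \ldots, n\}^2$, each key packable into a single $\order(\log n)$-bit word. These can be handled either by a deterministic dynamic dictionary on integer keys supporting $\order(1)$ worst-case updates and lookups, or, offline, by first computing all maximal palindromes of $T$ via Manacher's algorithm in $\order(n)$ time, reading off all edges, radix-sorting them by (parent, character) key, and installing the children in static arrays before running the scan above. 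Either route supplies $\order(1)$ child lookup, so the entire construction of the tree, its suffix links, its series links, and $\LSufPalArray$ fits within the claimed $\order(n)$ budget.
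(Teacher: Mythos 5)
The paper offers no proof of this statement: it is imported wholesale from Rubinchik and Shur's eertree paper as a black-box citation, so your attempt is a reconstruction of an external result rather than a rival to an argument in the text. Most of your reconstruction is the standard one and is sound: the characterization of the new longest palindromic suffix as $cuc$ with the $\bot$ sentinel, the telescoping bound on the first suffix-link walk, the recurrence $\serieslink(v)=\link(v)$ if $\delta_{\link(v)}\neq\delta_v$ and $\serieslink(v)=\serieslink(\link(v))$ otherwise, and reading $\LSufPalArray$ off as the sequence $\ell_1,\dots,\ell_n$ are all correct. (One small accounting issue: your potential argument, phrased in terms of $|\ell|$, bounds only the first walk; the second walk that computes $\link(cuc)$ starts at $\link(u)$ and ends at a node unrelated to $\ell_{j+1}$, so it needs its own amortization, e.g.\ via suffix-link depth. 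This is standard and repairable.)

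The genuine gap is exactly where you locate the difficulty: the $\order(1)$ child lookup. Your route (a) invokes ``a deterministic dynamic dictionary on integer keys supporting $\order(1)$ worst-case updates and lookups'' in linear space; no such structure is known, and asserting one is not a legitimate step (randomized hashing gives expected time only, and deterministic alternatives cost $\omega(1)$ per operation). Your route (b) is circular as stated: ``reading off all edges'' from Manacher's output requires first deduplicating the $\order(n)$ candidate palindromic substrings and identifying, for each distinct palindrome $cvc$, the node corresponding to $v$ --- but assigning consistent node identifiers to distinct palindromes is essentially the eertree construction itself. It can be rescued (e.g.\ using the fact that each distinct palindrome is the longest palindromic suffix of $T[1..j]$ at its leftmost ending position $j$, plus suffix-array/LCE machinery over the linearly sortable alphabet to deduplicate and to locate parents), but none of that machinery appears in your sketch, and it is precisely the part that separates the easy $\order(n\log\sigma)$ bound from the claimed deterministic $\order(n)$ bound. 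As written, the proof establishes $\order(n\log\sigma)$ deterministic or $\order(n)$ expected time, not the theorem.
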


Let us consider the subgraph $\mathcal{S}$ of $\paltree(T)$
that consists of all ordinary nodes and reversals of all series links.
By definition, $\mathcal{S}$ has no cycle, and $\mathcal{S}$ is connected
(any node is reachable from the node $\varepsilon$), i.e., it forms a tree.
We call the tree $\mathcal{S}$ the series tree of $T$ and denote it by $\seriestree(T)$.
By definition of series links, the set of lengths of palindromic suffixes of $v$ that are longer than $|\serieslink(v)|$ can be represented by an arithmetic progression.
Each node $v$ stores the arithmetic progression, represented by a triple consisting of its first term, its common difference, and the number of terms.
Arithmetic progressions for all nodes can be computed in linear time by traversing the palindromic tree.
It is known that the length of a path consisting of series links is $\order(\log n)$~\cite{rubinchik2018eertree}.
Hence, the height of $\seriestree(T)$ is $\order(\log n)$.
See Figure~\ref{fig:eertree}(b) for illustration.
\begin{figure}[tbp]
  \begin{minipage}[t]{0.45\linewidth}
    \centering
    \subfloat[$\paltree(T)$]{\includegraphics[width=5.0cm]{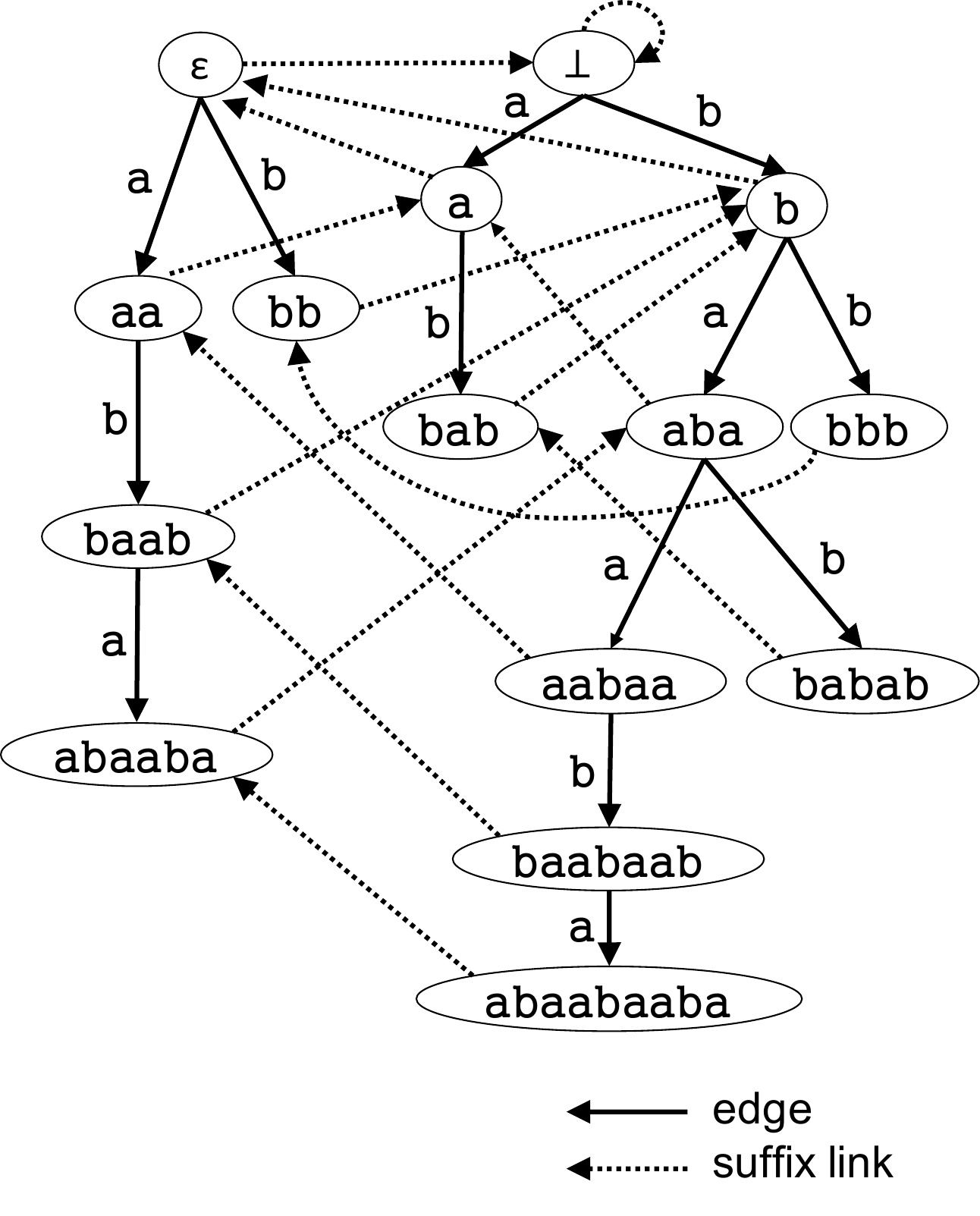}
    \label{fig:eertree1}}
  \end{minipage}
  \begin{minipage}[t]{0.45\linewidth}
    \centering
    \subfloat[$\seriestree(T)$]{\includegraphics[width=5.0cm]{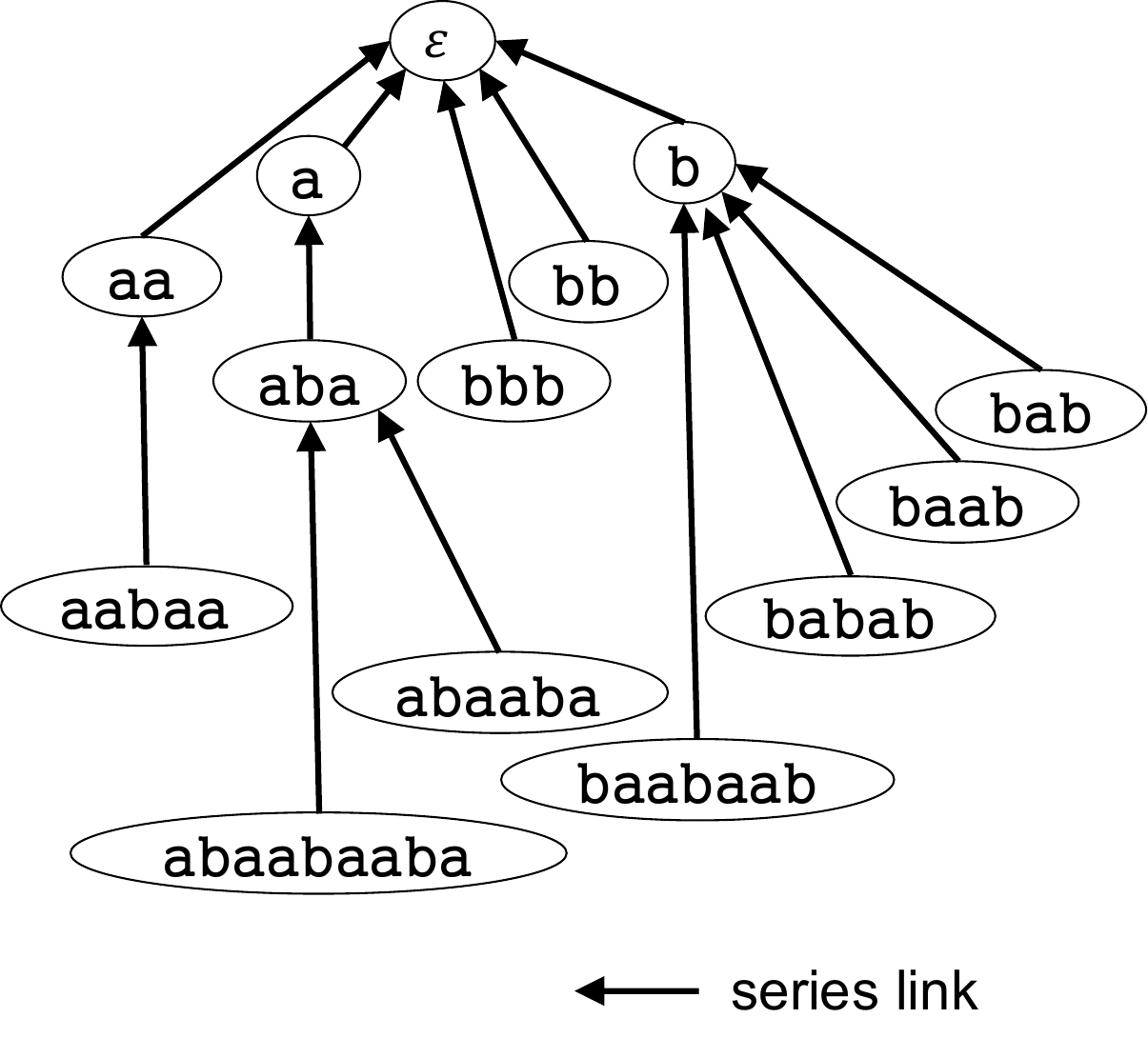}
    \label{fig:eertree2}}
  \end{minipage}
  \caption{
    Illustration for the palindromic tree and the series tree of string $T=\mathtt{abaabaabababbb}$
    Since $\delta_{\mathtt{abaabaaba}}=|\mathtt{abaabaaba}|-|\mathtt{abaaba}|=3$, 
    $\delta_{\mathtt{abaaba}}=|\mathtt{abaaba}|-|\mathtt{aba}|=3$, 
    and $\delta_{\mathtt{aba}}=|\mathtt{aba}|-|\mathtt{a}|=2$,
    then $\serieslink(\mathtt{abaabaaba})=\mathtt{aba}$.
    $\mathtt{abaabaaba}$ stores the arithmetic progression representing $\{6,9\}$,
    $\mathtt{abaaba}$ stores the arithmetic progression representing $\{6\}$
    and $\mathtt{aba}$ stores the arithmetic progression representing $\{3\}$.
  }
  \label{fig:eertree}
\end{figure}

\paragraph{Weighted Ancestor Query}
A rooted tree whose nodes are associated with integer weights is called a monotone-weighted tree
if the weight of every non-root node is not smaller than the parent's weight.
Given a monotone-weighted tree $\mathcal{T}$ for preprocess and a node $v$ and an integer $k$ for query, a weighted ancestor query (WAQ) returns the ancestor $u$ closest to the root of $v$ such that the weight of $u$ is greater than $k$.
Let $\waq_{\mathcal{T}}(v,k)$ be the output of the weighted ancestor query for tree $\mathcal{T}$, node $v$, and integer $k$.
See Figure~\ref{fig:WAQ} for a concrete example.
\begin{figure}[tbp]
    \centering
    \includegraphics[width=0.5\linewidth]{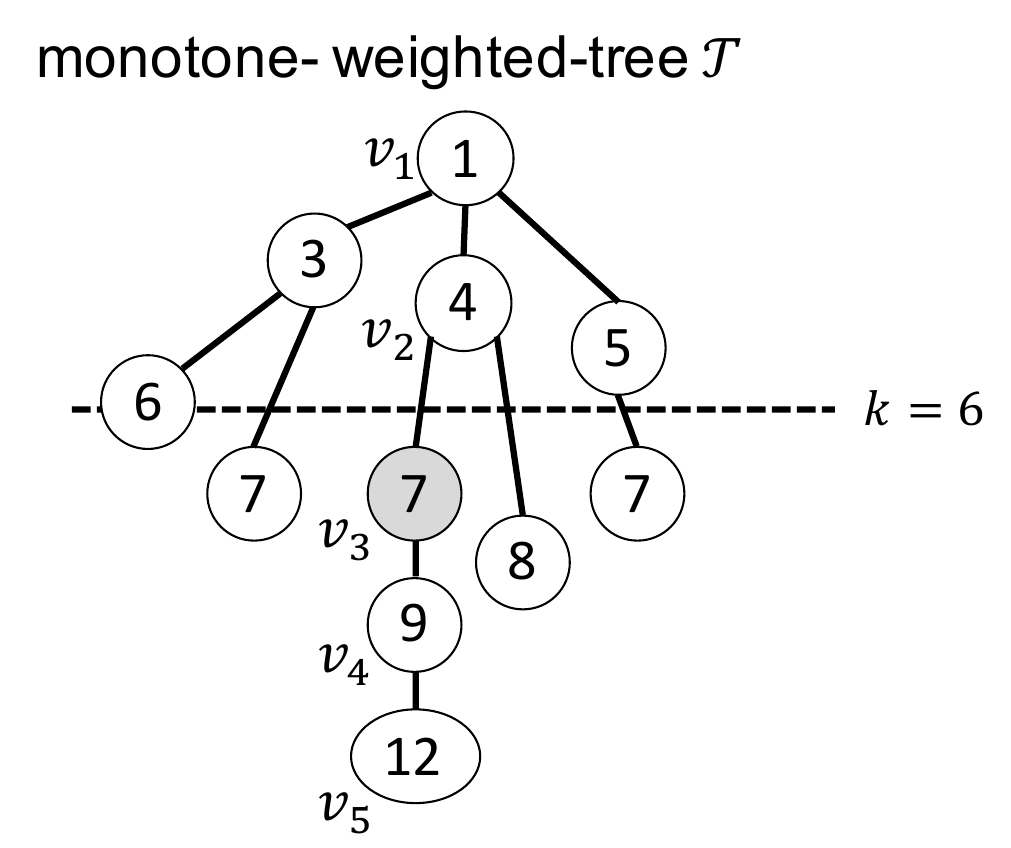}
    \caption{
    Illustration for weighted ancestor query.
    Integers in nodes denote the weights.
    Given a node $v_5$ in a monotone-weighted tree $\mathcal{T}$ and an integer $k=6$ for query,
    WAQ returns the node $v_3$ 
    since $v_3$ is an ancestor of $v_5$, $\weight(v_3)>k=6$, and the weight of the parent $v_2$ of $v_3$ is not greater than $k=6$.
    } \label{fig:WAQ}
\end{figure}
It is known that there is an $\order(N)$-space data structure that can answer any weighted ancestor query in $\order(\log\log N)$ time where $N$ is the number of nodes in the tree~\cite{amir2007pattern}.
In general, the query time $\order(\log\log N)$ is known to be optimal within $\order(N)$ space~\cite{puatracscu2006time}.
On the other hand, if the height of the input tree is low enough, the query time can be improved:
\begin{theorem}[Proposition 15 in~\cite{ganardi2021compression}]
\label{theorem:WAQ}
  Let $\omega$ be the word size.
  Given a monotone-weighted tree with $N$ nodes and height $\order(\omega)$,
  one can construct an $\order(N)$ space data structure in $\order(N)$ time
  that can answer any weighted ancestor query in constant time.
\end{theorem}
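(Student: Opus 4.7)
The plan is to reduce each query $\waq_{\mathcal{T}}(v,k)$ to a predecessor-style search along the weight sequence on the root-to-$v$ path, and to answer that search in constant time by word-level parallelism. Because the tree has height $\order(\omega)$ and is monotone-weighted, this sequence has length $\order(\omega)$ and is non-decreasing; the answer is the ancestor of $v$ at the smallest depth $d$ at which the sequence first exceeds $k$. Given $d$, the actual node can be returned by a single lookup in a standard linear-space, linear-preprocessing, constant-time level-ancestor oracle on $\mathcal{T}$. Thus the whole task boils down to locating $d$ in $\order(1)$ time.

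I would then attach to each node $v$ a succinct digest $\sigma_v$ of its ancestor weight sequence, occupying $\order(\omega)$ bits and supporting, for any $k$, the constant-time query ``at what depth does $\sigma_v$ first exceed $k$?''. Since passing from $\mathrm{parent}(v)$ to $v$ appends a single weight to the sequence while preserving monotonicity, $\sigma_v$ can be computed from $\sigma_{\mathrm{parent}(v)}$ by a constant number of word operations, so a single DFS produces all digests in $\order(N)$ time and $\order(N\omega)$ bits, matching the target budget. The level-ancestor oracle is built in parallel in the same linear bounds, and query time follows immediately.

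The main obstacle is the design of $\sigma_v$ itself. A single weight can require $\Theta(\omega)$ bits, so the naive list of $\order(\omega)$ weights would take $\Theta(\omega^2)$ bits per node and would not admit $\order(1)$ predecessor queries. The technical heart of the construction must therefore be a fusion-tree-style sketch that keeps only a few ``distinguishing bits'' of each stored weight, compressing the sequence into $\order(\omega)$ bits while still resolving the position of any $k$ via a small precomputed global lookup table. Maintaining this sketch across a single DFS edge in $\order(1)$ time -- without rebuilding the distinguishing bits already committed for higher ancestors -- is the delicate step, and is precisely where the monotonicity assumption is used essentially: appending a weight no smaller than all those already present tightly constrains how the sketch's digest structure can change, which is what permits a purely local update.
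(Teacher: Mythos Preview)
The paper does not prove this theorem at all: it is quoted verbatim as Proposition~18 of Ganardi and Gawrychowski~\cite{ganardi2021compression} and used as a black box. There is therefore no ``paper's own proof'' to compare against; you are attempting to re-derive an external result.

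As for the content of your sketch, the reduction you propose is the right one: a weighted ancestor query on a monotone-weighted tree of height $h=\order(\omega)$ is exactly a predecessor query on the length-$h$ non-decreasing sequence of ancestor weights, followed by a level-ancestor lookup. The level-ancestor part is standard and costs $\order(N)$ words and $\order(1)$ query time, so all the work is indeed in the predecessor part.

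The gap is in the construction of the digest $\sigma_v$. You assert that a ``fusion-tree-style sketch'' of the $\order(\omega)$ ancestor weights fits in $\order(\omega)$ bits and supports $\order(1)$-time threshold location, and that it can be maintained along a DFS edge in $\order(1)$ time. Neither claim follows from the standard fusion-tree machinery. A fusion node on $t$ keys uses $\Theta(t)$ distinguishing bit positions and stores a $\Theta(t)$-bit sketch for each key, i.e.\ $\Theta(t^2)$ bits; with $t=\Theta(\omega)$ this is $\Theta(\omega^2)$ bits per node and $\Theta(N\omega^2)$ bits overall, exceeding the stated $\order(N\omega)$-bit budget. Monotonicity of the key set does not by itself reduce the number of distinguishing bits: a strictly increasing sequence of $\omega$ integers can still require $\omega-1$ branching bits in its compressed trie. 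Likewise, appending one key to a fusion node can change the set of distinguishing positions globally, so the $\order(1)$-time incremental update you posit is not obvious; you acknowledge this is ``the delicate step'' but do not carry it out. Until you exhibit an actual $\order(\omega)$-bit per-node structure with the required query and update guarantees (or cite one), the argument is incomplete precisely at its technical core.
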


In this paper, we use weighted ancestor queries only on the series tree of $T$ whose height is $\order(\log n) \subseteq \order(\omega)$, where $\omega$ is the word size, thus we will apply Theorem~\ref{theorem:WAQ}.
Note that we assume the word-RAM model with word size $\omega \ge \log n$ bits.

\paragraph{Range Maximum Query}
Given an integer array $A$ of length $m$ for preprocess and two indices $i$ and $j$ with $1\leq i\leq j \leq m$ for query,
range maximum query returns the index of a maximum element in the sub-array $A[i..j]$.
Let $\mathrm{RMQ}_A(i, j)$ be the output of the range maximum query for array $A$ and indices $i, j$.
In other words, $\mathrm{RMQ}_A(i,j)=\arg\max_k\{A[k]\mid i \leq k \leq j\}$.
The following result is known:
\begin{theorem}[Theorem 5.8 in~\cite{fischer2011space}]
\label{theorem:RMQ}
  Let $m$ be the size of the input array $A$.
  There is a data structure of size $2m+o(m)$ bits
  that can answer any range maximum query on $A$ in constant time.
  The data structure can be constructed in $\order(m)$ time.  
\end{theorem}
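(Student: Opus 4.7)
My plan is a three-case decomposition. For a query $[i, j]$, let $\ell_a$, $\ell_b$, $\ell_c$ denote respectively the lengths of the longest palindromic prefix of $T[i..j]$, the longest palindromic suffix of $T[i..j]$, and the longest maximal palindrome of $T$ contained in $T[i..j]$. I will argue that the answer equals $\max(\ell_a,\ell_b,\ell_c)$: any longest palindrome $P$ of $T[i..j]$ with center $c$ extends within $T$ to a maximal palindrome $M_P$ at the same center; if $M_P \subseteq [i..j]$ then $|P| \leq |M_P| \leq \ell_c$, and otherwise the sub-palindrome of $M_P$ at $c$ with maximal radius $\min(c-i,\, j-c)$ contains $P$ and is either a palindromic prefix of $T[i..j]$ (so $|P| \leq \ell_a$) or a palindromic suffix (so $|P| \leq \ell_b$), according to whether $c$ lies in the left or right half of $[i, j]$.

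To compute $\ell_b$ in constant time I use $\paltree(T)$, $\seriestree(T)$, and $\LSufPalArray$ from Theorem~\ref{theorem:paltree}, plus the WAQ data structure of Theorem~\ref{theorem:WAQ} installed on $\seriestree(T)$; this is applicable since the series tree has height $\order(\log n) \subseteq \order(\omega)$. The node $v_j := \LSufPalArray[j]$ represents the longest palindromic suffix of $T[1..j]$, and all palindromic suffixes of $T[1..j]$ lie along the suffix-link chain from $v_j$; by definition of series links, their lengths split into $\order(\log n)$ arithmetic progressions, one per series-link edge on the path from $v_j$ to the root of $\seriestree(T)$. Calling $\waq_{\seriestree(T)}(v_j,\, j-i+1)$ locates in $\order(1)$ the highest ancestor of $v_j$ whose weight exceeds $j-i+1$; the progression on the series-link edge between this ancestor and its parent contains $\ell_b$, which is read off by a single arithmetic step from the progression's triple (first term, common difference, length). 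The quantity $\ell_a$ is obtained symmetrically using identical structures built on $T^R$, since the palindromic prefixes of $T[i..j]$ correspond to palindromic suffixes of $T^R[1..\,n-i+1]$ of length at most $j-i+1$.

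For $\ell_c$ I precompute by Manacher's algorithm the array $\MP$ of maximal-palindrome lengths over all (integer and half-integer) centers of $T$ and equip it with the constant-time RMQ of Theorem~\ref{theorem:RMQ}. The native form of $\ell_c$ is two-dimensional: maximise $\MP[c]$ subject to $c - \lfloor \MP[c]/2 \rfloor \geq i$ and $c + \lfloor \MP[c]/2 \rfloor \leq j$. My plan is to collapse this to a constant number of 1-D RMQ probes by splitting at $m = \lfloor(i+j)/2\rfloor$: for $c \in [i, m]$ the right-containment becomes automatic so only the left inequality matters, and symmetrically for $c \in (m, j]$. Coupled with the structural fact that the $(L[c], R[c])$-points of maximal palindromes lie on diagonals $L + R = 2c$, this lets one build a constant number of auxiliary RMQ-ready arrays (indexed either by $c$ or by $L[c]$) so that a single RMQ per half returns the best fitting candidate; fit of a winner is verified in $\order(1)$ against the series-tree information already at hand from the $\ell_a$ and $\ell_b$ computations, and a constant number of secondary RMQs resolve the rare case of a non-fitting winner.

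The main obstacle will be making this $\ell_c$ step fully rigorous, since a naive ``one RMQ per half'' can in principle overlook an interior fitting maximal palindrome whose length is strictly smaller than an adjacent non-fitting winner. To rule this out, my plan is to exploit Manacher's periodicity constraints (two heavily overlapping maximal palindromes share a period, sharply bounding their count) together with the $\order(\log n)$ depth of series-link chains, which should cap the non-fitting alternatives and let $\order(1)$ additional 1-D probes recover the correct answer. Once this step is proved, construction time and space follow directly from Theorems~\ref{theorem:paltree},~\ref{theorem:WAQ}, and~\ref{theorem:RMQ} applied to $T$ and $T^R$, each of which is linear, completing the proof.
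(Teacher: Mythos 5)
Your proposal does not address the statement you were asked to prove. The statement is Theorem~\ref{theorem:RMQ}, the Fischer--Heun result that an array of size $m$ admits a $2m+o(m)$-bit data structure answering range maximum queries in constant time, constructible in $\order(m)$ time. This is an external, cited result about succinct RMQ structures; the paper itself gives no proof and uses it as a black box. A proof would have to construct the data structure itself --- typically via the Cartesian tree of $A$, a succinct balanced-parentheses (or DFUDS/2d-min-heap) encoding of that tree in $2m+o(m)$ bits, a block decomposition with precomputed lookup tables for in-block queries, and a sparse-table or tree-navigation mechanism across blocks --- and then verify the space, query-time, and construction-time bounds. None of this appears in your write-up. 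What you have instead sketched is a proof of the paper's main result (Theorem~\ref{theorem:conclusion}), the internal longest palindrome query data structure, and in doing so you explicitly invoke Theorem~\ref{theorem:RMQ} as an ingredient. As a proof of Theorem~\ref{theorem:RMQ} this is circular: you assume the very statement to be established.

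Separately, even read as an argument for the main theorem, your handling of the infix case is where your own stated doubts lie, and it is more complicated than necessary. The paper's Lemmas~\ref{lemma:infix1} and~\ref{lemma:infix2} show that any palindrome of $T$ whose center lies strictly between the centers of the longest palindromic prefix $T[i..t]$ and the longest palindromic suffix $T[s..j]$ is automatically an infix of $T[i..j]$, and that centers outside this closed range cannot beat the prefix/suffix candidates. Hence the single query $\RMQ_{\MP}(i+t,\,s+j-2)$ on the array of maximal-palindrome lengths already returns a valid, contained candidate; no two-dimensional containment constraint, no verification step, and no periodicity or counting argument are needed. But the primary issue remains that you proved the wrong statement.
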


\section{Internal Longest Palindrome Queries}\label{sec:algo}
In this section, we propose an efficient data structure for the internal longest palindrome query defined as follows:
\begin{itembox}[l]{\bf Internal longest palindrome query}
  {\bf Preprocess:} A string $T$ of length $n$.\\
  {\bf Query input:} Two indices $i$ and $j$ with $1 \leq i \leq j \leq n$.\\
  {\bf Query output:} The longest palindromic substring in $T[i..j]$.
\end{itembox}
Our data structure requires only $\order(n)$ words of space
and can answer any internal longest palindrome query in constant time.
To answer queries efficiently, we classify all palindromic substrings of $T$ into
palindromic prefixes, palindromic infixes, and palindromic suffixes.
First, we compute the longest palindromic prefix and the longest palindromic suffix of $T[i..j]$.
Second, we compute a palindromic infix that is a candidate for the answer.
As we will discuss in a later subsection, this candidate may not be the longest palindromic infix of $T[i.. j]$.
Finally, we compare the three above palindromes and output the longest one.

\subsection{Palindromic Suffixes and Prefixes}
First, we compute the longest palindromic suffix of $T[i..j]$.
In the preprocessing, we build $\seriestree(T)$ and a data structure for the weighted ancestor queries on $\seriestree(T)$, and compute $\LSufPalArray$ as well.
The query algorithm consists of three steps:
\begin{description}
  \item[Step 1: Obtain the longest palindromic suffix of ${T[1..j]}$.]\mbox{}\\
    We obtain the longest palindromic suffix $v$ of $T[1..j]$ from $\LSufPalArray[j]$.
    If $|v| \leq |T[i..j]|$, then $v$ is the longest palindromic suffix of $T[i.. j]$.
    Then we return $T[j-|v|+1.. j]$, and the algorithm terminates.
    Otherwise, we continue to Step 2.
  \item[Step 2: Determine the group to which the desired length belongs.]\mbox{}\\
    Let $\ell$ be the length of the longest palindromic suffix of $T[i..j]$ we want to know.
    We use the longest palindromic suffix $v$ of $T[1..j]$ obtained in Step 1.
    First, we find the shortest palindrome $u$ that is an ancestor of $v$ in $\seriestree(T)$ and has a length at least $|T[i..j]|$.
    Such a palindrome (equivalently the node) $u$ can be found by weighted ancestor query on the series tree,
    i.e., $u = \waq_{\seriestree(T)}(v, j-i)$. 
    Then $|u|$ is an upper bound of $\ell$.
    Let $\suf_\alpha$ be the group such that $|u| \in \suf_\alpha$.
    If the smallest element $\suf_{\alpha, 1}$ in $\suf_\alpha$ is at most $|T[i..j]|$, 
    the length $\ell$ belongs to the same group $\suf_{\alpha}$ as $|u|$.
    Otherwise, the length $\ell$ belongs to the previous group $\suf_{\alpha-1}$.
  \item[Step 3: Calculate the desired length.]\mbox{}\\
    Let $\suf_\beta$ be the group to which the length $\ell$ belongs, which is determined in Step 2.
    Since $\suf_{\beta}$ is an arithmetic progression,
    i.e., $\suf_{\beta,\gamma}=\suf_{\beta,1}+(\gamma-1)\dif_\beta$ for $1 \leq \gamma \leq |\suf_\beta|$,
    the desired length $\ell$ can be computed by using a constant number of arithmetic operations.
    Then we return $T[j-\ell+1..j]$.
\end{description}
\vspace{\baselineskip}
See Figure~\ref{fig:suffix} for illustration.
Now, we show the correctness of the algorithm and analyze time and space complexities.

\begin{figure}[t]
    \centering
    \includegraphics[width=\linewidth]{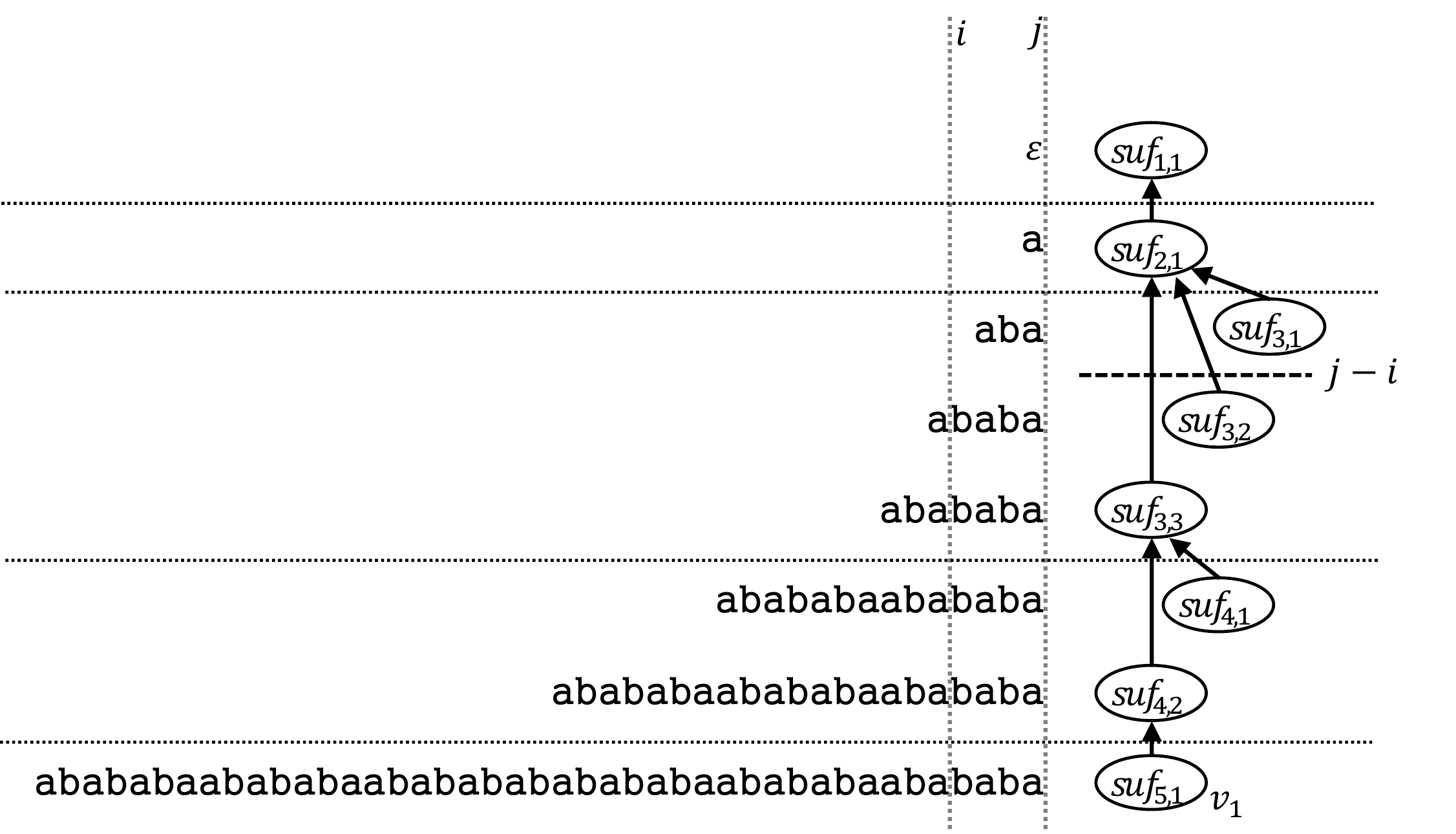}
    \caption{
    Illustration for how to compute the longest palindromic suffix of $T[i..j]$, when $T[1..j]=\mathtt{abababaabababaabababababababaabababaabababa}$.
    The graph on the right hand depicts a part of the series tree of a string $T$, and the lengths of palindromes are written inside the nodes.
    In Step 1, we obtain the length $\suf_{5,1}$ of the longest palindromic suffix $v_1$ of $T[1.. j]$.
    In Step 2, we find $\suf_{3,3}$ by $\waq_{\seriestree(T)}(v_1, j-i)$.
    Since $\suf_{3,1} > j-i+1$, the desired length belongs to $\suf_3$.
    In Step 3, since $\suf_3$ is an arithmetic progression, we can find that $\suf_{3,1}$ is the longest palindromic suffix of $T[i..j]$ in constant time.
    }
    \label{fig:suffix}
\end{figure}

\begin{lemma}
\label{lemma:suffix}
  We can compute the longest palindromic suffix and prefix of $T[i..j]$ in $\order(1)$ time
  with a data structure of size $\order(n)$ that can be constructed in $\order(n)$ time.
\end{lemma}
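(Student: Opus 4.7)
The plan is to establish correctness of the three-step query algorithm for the longest palindromic suffix already described above, to verify its preprocessing fits in $\order(n)$ time and space, and finally to reduce the longest palindromic prefix of $T[i..j]$ to the suffix case by running a symmetric copy of the data structure on $T^R$.

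First I would handle Step~1: because $v=\LSufPalArray[j]$ is the longest palindromic suffix of $T[1..j]$, whenever $|v|\leq j-i+1$ the occurrence $T[j-|v|+1..j]$ lies inside $T[i..j]$, so $v$ is a palindromic suffix of $T[i..j]$; any strictly longer palindromic suffix of $T[i..j]$ would also be a palindromic suffix of $T[1..j]$, contradicting maximality of $v$. The remaining case is $|v|>j-i+1$, which is what Steps~2--3 address.

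The heart of the proof is a structural claim: among the non-root ancestors of $v$ in $\seriestree(T)$ (with $v$ included), the lengths are exactly the maximum-length elements of the arithmetic progressions $\suf_1,\dots,\suf_p$ partitioning $\suf(T[1..j])$. This follows from the definition of $\serieslink$, since inside a single progression $\suf_k$ all elements share the same $\delta$, so a series-link hop skips the remaining elements of $\suf_k$ and lands on the top of $\suf_{k-1}$, which has a different $\delta$. Granted this, $u=\waq_{\seriestree(T)}(v,j-i)$ is the top of some progression $\suf_\alpha$ with $\suf_{\alpha,|\suf_\alpha|}\geq j-i+1$, and, when $\alpha>1$, $\suf_{\alpha-1,|\suf_{\alpha-1}|}\leq j-i$. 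If $\suf_{\alpha,1}\leq j-i+1$, the desired length is the largest term of $\suf_\alpha$ not exceeding $j-i+1$, computed in $\order(1)$ from the progression triple stored at $u$; otherwise it equals $\suf_{\alpha-1,|\suf_{\alpha-1}|}$, namely the weight of $u$'s parent in $\seriestree(T)$. Because single characters are always palindromic suffixes, some element of $\suf(T[1..j])$ always fits inside $T[i..j]$, so this case analysis is exhaustive.

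For the bookkeeping, Theorem~\ref{theorem:paltree} produces $\paltree(T)$, $\seriestree(T)$, and $\LSufPalArray$ in $\order(n)$ time and space, and the arithmetic-progression triples can be attached during a single traversal of $\seriestree(T)$. The series-tree height is $\order(\log n)\subseteq\order(\omega)$, so Theorem~\ref{theorem:WAQ} yields a constant-time weighted-ancestor structure of size $\order(n)$ built in $\order(n)$ time, making every step of the query $\order(1)$. For the palindromic prefix I would mirror the entire construction on $T^R$, since palindromic prefixes of $T[i..j]$ correspond bijectively to palindromic suffixes of $T^R[n-j+1..n-i+1]$; this at most doubles the constants. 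The main obstacle I anticipate is justifying the structural claim about series-link chains carefully enough to guarantee that one weighted ancestor query identifies the correct progression; once that is in hand, the rest is direct bookkeeping from Theorems~\ref{theorem:paltree} and~\ref{theorem:WAQ}.
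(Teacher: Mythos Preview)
Your proposal is correct and follows essentially the same approach as the paper: both use $\LSufPalArray$ for Step~1, a single weighted ancestor query on $\seriestree(T)$ for Step~2, and arithmetic on the stored progression for Step~3, with the same appeals to Theorems~\ref{theorem:paltree} and~\ref{theorem:WAQ}. Your write-up is arguably a bit tighter in two places: you make explicit the structural fact that the ancestors of $v$ in $\seriestree(T)$ are exactly the tops of the progressions $\suf_1,\dots,\suf_p$ of $\suf(T[1..j])$ (the paper uses this implicitly), and you spell out the reduction for the prefix case via $T^R$, which the paper's proof leaves unstated.
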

\begin{proof}
In the preprocessing, we build $\seriestree(T)$, $\LSufPalArray$, $\LPrePalArray$ and a data structure of weighted ancestor query on $\seriestree(T)$
in $\order(n)$ time~(Theorem~\ref{theorem:paltree} and \ref{theorem:WAQ}).
Recall that since the height of $\seriestree(T)$ is $\order(\log n) \subseteq \order(\omega)$, we can apply Theorem~\ref{theorem:WAQ} to the series tree.
Again, by Theorem~\ref{theorem:paltree} and \ref{theorem:WAQ},
the space complexity is $\order(n)$ words of space.

In what follows, let $\ell$ be the length of the longest palindromic suffix of $T[i..j]$.
In Step 1, we can obtain the longest palindromic suffix $v$ of $T[1..j]$ by just referring to $\LSufPalArray[j]$.
If $|v| \leq |T[i..j]|$, $v$ is also the longest palindromic suffix of $T[i..j]$, i.e., $\ell = |v|$.
Otherwise, $v$ is not a substring of $T[i..j]$.
In Step 2, we first query $\waq_{\seriestree(T)}(v, j-i)$.
The resulting node $u$ corresponds to a palindromic suffix of $T[1.. j]$, which is longer than $|T[i.. j]|$.
Let $\suf_\alpha$ and $\suf_\beta$ be the groups to which $|u|$ and $\ell$ belong to, respectively.
If the smallest element $\suf_{\alpha, 1}$ in $\suf_\alpha$ is at most $j-i+1$,
then the desired length $\ell$ satisfies $\suf_{\alpha,1} \le \ell \le |u|$.
Namely, $\beta = \alpha$.
Otherwise, if $s$ is greater than $j-i+1$, $\ell$ is not in $\suf_\alpha$
but is in $\suf_{\alpha-x}$ for some $x > 1$.
If we assume that $\ell$ belong to $\suf_{\alpha-y}$ for some $y \ge 2$, 
the length of $\serieslink(u)$ belonging to $\suf_{\alpha-1}$ is longer than $T[i.. j]$.
However, it contradicts that $u$ is the answer of $\waq_{\seriestree(T)}(v, j-i)$.
Hence, if $s$ is greater than $j-i+1$, then the length $\ell$ is in $\suf_{\alpha-1}$.
Namely, $\beta = \alpha-1$.
In Step 3,
we can compute $\ell$ in constant time
since we know the arithmetic progression $\suf_\beta$ to which $\ell$ belongs.
More specifically, $\ell$ is the largest element that is in $\suf_\beta$ and is at most $j-i+1$.

Throughout the query algorithm,
all operations, including $\waq$ and operations on arithmetic progressions, can be done in constant time.
Thus the query algorithm runs in constant time.
\end{proof}

We can compute the longest palindromic prefix of $T[i..j]$ in a symmetric way using $\LPrePalArray$ instead of $\LSufPalArray$.

\subsection{Palindromic Infixes}
Next, we compute the longest palindromic infix except for ones that are obviously shorter than
the longest palindromic prefix or the longest palindromic suffix of the query substring.
We show that to find the desired palindromic infix, it suffices to consider maximal palindromes whose centers are between the centers of the longest palindromic prefix and the longest palindromic suffix of $T[i.. j]$.
Let $t$ be the ending position of the longest palindromic prefix and
$s$ be the starting position of the longest palindromic suffix.
Namely, $T[i..t]$ is the longest palindromic prefix
and $T[s..j]$ is the longest palindromic suffix of $T[i..j]$.

\begin{lemma}
\label{lemma:infix1}
    Let $w$ be a palindromic infix of $T[i..j]$ and $c$ be the center of $w$.
    If $c <\frac{i+t}{2}$ or $c > \frac{s+j}{2}$, $w$ cannot be the longest palindromic substring of $T[i..j]$.
\end{lemma}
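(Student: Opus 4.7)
The plan is to argue directly that a palindrome $w = T[a..b]$ with center $c = \frac{a+b}{2}$ lying outside the interval $[\frac{i+t}{2}, \frac{s+j}{2}]$ must be strictly shorter than either $T[i..t]$ or $T[s..j]$, and therefore cannot be the longest palindromic substring of $T[i..j]$.

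First I would fix notation: write $w = T[a..b]$ so $c = \frac{a+b}{2}$ and $|w| = b-a+1$, and use the containment $w \subseteq T[i..j]$, which gives $i \le a$ and $b \le j$. The proof then splits into two symmetric cases.

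In the case $c < \frac{i+t}{2}$, the integers $a+b$ and $i+t$ satisfy $a+b < i+t$. Combining this with $a \ge i$ yields $b < i + t - a \le t$, so $b \le t-1$. Hence $|w| = b - a + 1 \le (t-1) - i + 1 = t-i$, which is strictly less than $|T[i..t]| = t-i+1$. Since $T[i..t]$ is itself a palindromic substring of $T[i..j]$, $w$ cannot be longest. The symmetric case $c > \frac{s+j}{2}$ proceeds analogously: from $a+b > s+j$ and $b \le j$ we obtain $a \ge s+1$, and then $|w| \le j - s < j - s + 1 = |T[s..j]|$, so again $w$ is dominated, this time by the longest palindromic suffix.

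There is no real obstacle here beyond careful bookkeeping with the centers (which may be half-integers) and the endpoint constraints; since $a+b$ and $i+t$ are integers, the strict inequality between centers translates cleanly into a strict inequality between the sums and hence into a length comparison. The whole argument is a few lines of position arithmetic.
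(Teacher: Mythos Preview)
Your proof is correct and follows essentially the same approach as the paper: both argue that if the center of $w$ lies to the left of $\frac{i+t}{2}$ (resp.\ right of $\frac{s+j}{2}$), the containment $w \subseteq T[i..j]$ forces $|w| < |T[i..t]|$ (resp.\ $|w| < |T[s..j]|$). The only cosmetic difference is that the paper writes the length bound directly via the center as $|w| = 2(c-p)+1 < 2\bigl(\frac{i+t}{2}-i\bigr)+1 = t-i+1$, whereas you pass through the endpoint sum $a+b < i+t$; the two computations are equivalent.
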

\begin{proof}
Palindrome $w$ is a proper substring of $T[i..t]$ (resp. $T[s..j]$) if $c <\frac{i+t}{2}$ (resp. $c > \frac{s+j}{2}$).
Then, $w$ is shorter than $T[i..t]$ or $T[s..j]$~(see also Figure~\ref{fig:infix1}).
\end{proof}

\begin{figure}[tbp]
    \centering
    \includegraphics[width=0.5\linewidth]{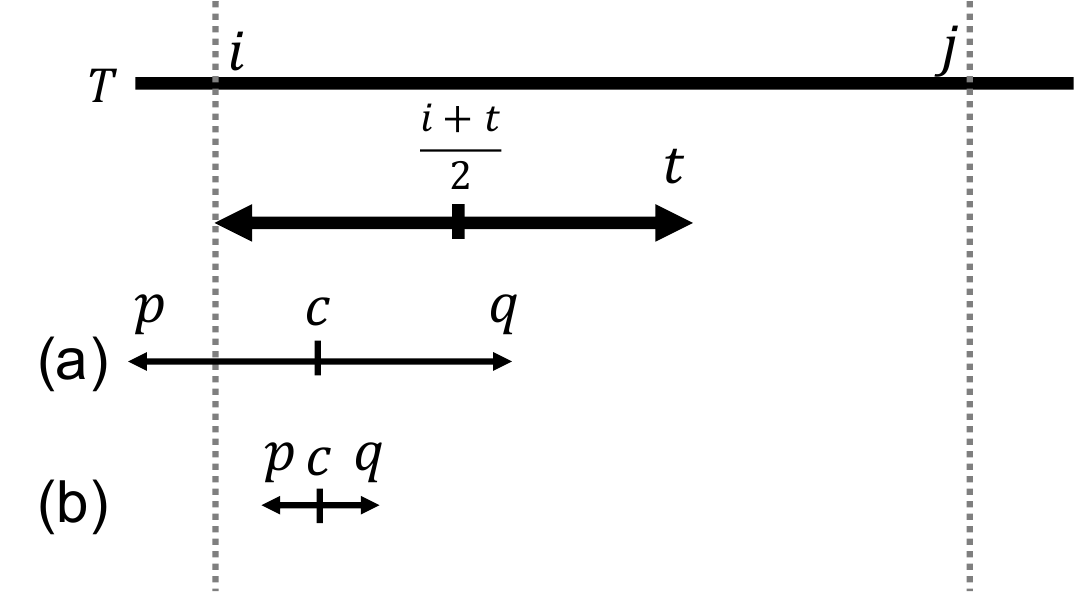}
    \caption{
    Illustration for Lemma~\ref{lemma:infix1}.
    Two-way arrows denote palindromic substrings of $T$.
    $T[i..t]$ is the longest palindromic prefix of $T[i..j]$.
    A palindrome whose center $c$ is less than $\frac{i+t}{2}$ is 
    either (a) not a substring of $T[i..j]$ or (b) shorter than the longest palindromic prefix of $T[i..j]$ as shown in this figure.
    }
    \label{fig:infix1}
     \centering
    \includegraphics[width=0.7\linewidth]{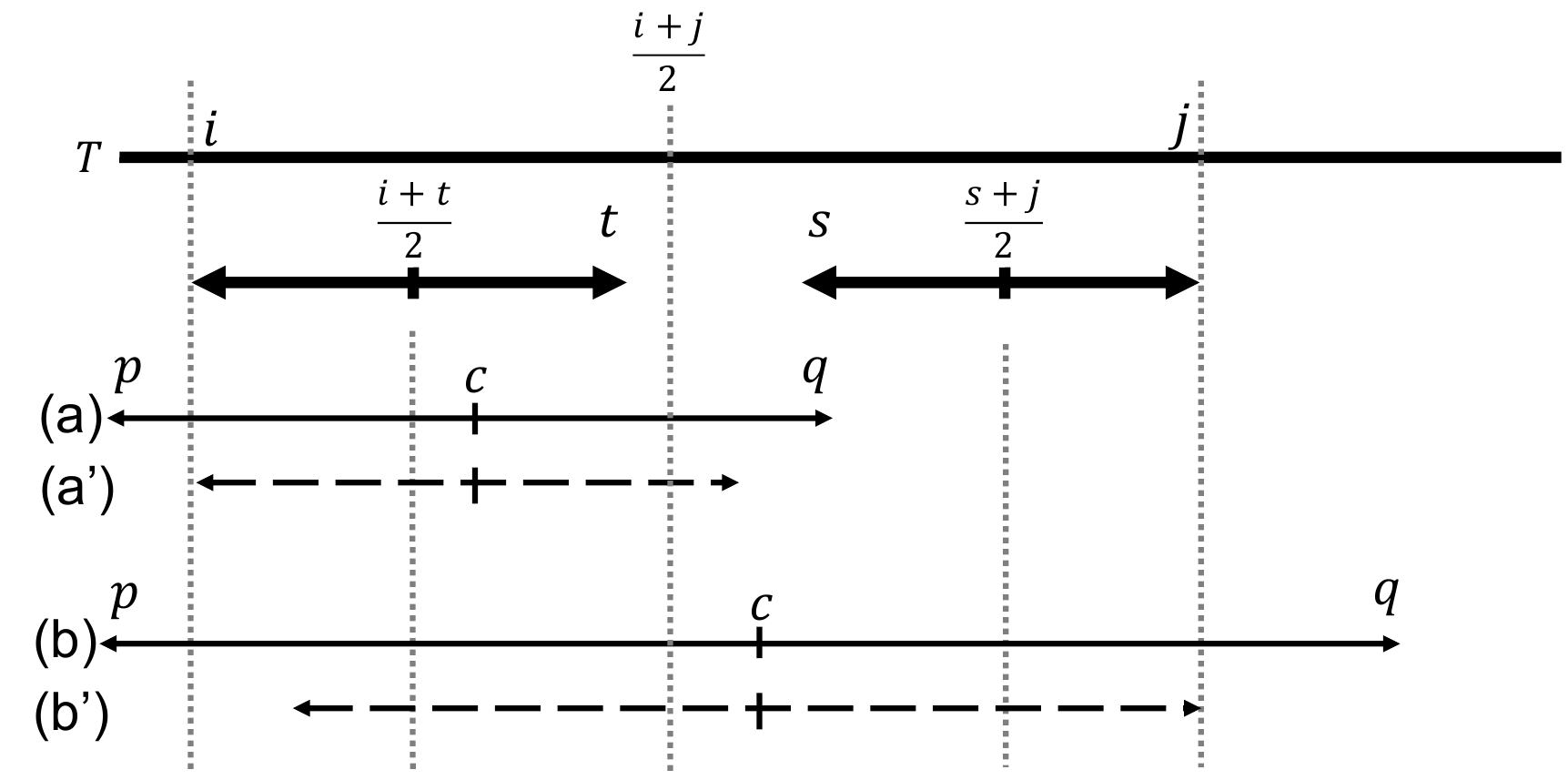}
    \caption{
    Illustration for contradictions in the proof of Lemma~\ref{lemma:infix2}.
    $T[i..t]$ is the longest palindromic prefix and $T[s..j]$ is the longest palindromic suffix of $T[i..j]$.
    If a palindrome as (a) exists, there exists a palindromic prefix (a') of $T[i.. j]$ that is longer than $T[i..t]$, a contradiction.
    Similarly, the existence of a palindrome as (b) leads to a contradiction.
    }
    \label{fig:infix2}
\end{figure}

Then, we consider palindromes whose centers are between the centers of the longest palindromic prefix and the longest palindromic suffix of $T[i.. j]$.

\begin{lemma}
\label{lemma:infix2}
  Let $w$ be a palindromic substring of $T$ and $c$ be the center of $w$.
  If $\frac{i+t}{2} < c < \frac{s+j}{2}$, then $w$ is a palindromic infix of $T[i..j]$.
\end{lemma}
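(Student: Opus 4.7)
The plan is to show, by contradiction, that $w = T[p..q]$ satisfies $i < p$ and $q < j$, which together mean $w$ is a palindromic infix of $T[i..j]$. Since the hypothesis $(i+t)/2 < c < (s+j)/2$ is symmetric under reflection about $(i+j)/2$ (swapping $i \leftrightarrow j$, $t \leftrightarrow s$, $p \leftrightarrow q$), it suffices to rule out $p \le i$; the inequality $q < j$ then follows by the mirrored argument.

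Assume $p \le i$. The main idea is to trim $w$ at a boundary of $T[i..j]$ and then contradict the maximality of $T[i..t]$ or of $T[s..j]$. Since $c > (i+t)/2 \ge i$, the reflection $2c - i$ of $i$ across $c$ lies in $[c,q]$ (the upper bound uses $p \le i$), so $T[i..2c-i]$ is a substring of $w$ symmetric about $c$, hence a palindrome, of length $2(c-i)+1$. The hypothesis $c > (i+t)/2$ sharpens this to $2(c-i)+1 > t-i+1 = |T[i..t]|$.

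Next I would split on whether this new palindrome fits inside $T[i..j]$. If $2c - i \le j$, then $T[i..2c-i]$ is a palindromic prefix of $T[i..j]$ strictly longer than $T[i..t]$, contradicting maximality of $T[i..t]$. Otherwise $2c > i+j$, whence $q = 2c - p \ge 2c - i > j$, so $w$ overhangs both sides of $T[i..j]$. In this subcase I trim from the right instead: since $q \ge j$, the substring $T[2c-j..j]$ is a palindromic substring of $w$ of length $2(j-c)+1$, and the hypothesis $c < (s+j)/2$ gives $2(j-c)+1 > j-s+1 = |T[s..j]|$; moreover $2c > i+j$ yields $2c - j > i$, so this palindrome is a palindromic suffix of $T[i..j]$ strictly longer than $T[s..j]$, contradicting maximality of the longest palindromic suffix.

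The main obstacle I anticipate is recognizing the subcase where $w$ overhangs both endpoints of $T[i..j]$ simultaneously: a single trim in one direction is not enough, and one must pair the left overhang ($p \le i$) with the right-side maximality ($T[s..j]$) argument. The two strict inequalities in the hypothesis are precisely what is needed to guarantee that one of the two trims produces a palindrome that actually sits inside $T[i..j]$ and yields the desired contradiction.
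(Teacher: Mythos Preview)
Your proof is correct and follows essentially the same route as the paper's: both assume $p\le i$, split into the two cases $c\le\frac{i+j}{2}$ (your condition $2c-i\le j$) and $c>\frac{i+j}{2}$, and in the respective cases trim $w$ to a palindromic prefix longer than $T[i..t]$ or a palindromic suffix longer than $T[s..j]$, obtaining a contradiction; the other inequality $q<j$ is then handled by symmetry. Your write-up is in fact slightly more careful than the paper's in checking that the trimmed palindromes indeed lie inside $T[i..j]$.
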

\begin{proof}
Let $w = T[p.. q]$. Then, $c = \frac{p+q}{2}$.
To prove that $w$ is a palindromic infix, we show that $p > i$ and $q < j$.
For the sake of contradiction, we assume $p \leq i$.
If $\frac{i+t}{2}<c \leq \frac{i+j}{2}$, there exists a palindromic prefix $w_1$ whose center is $c$.
This contradicts that $T[i.. t]$ is the longest palindromic prefix of $T[i.. j]$ since $T[i..t]$ is a substring of $w_1$~(see also Figure~\ref{fig:infix2}).
Otherwise, if $\frac{i+j}{2}<c<\frac{s+j}{2}$,
there exists a palindromic suffix whose $w_2$ center is $c$.
This contradicts that $T[s..j]$ is the longest palindromic suffix of $T[i.. j]$ since $T[i..t]$ is a substring of $w_2$~(see also Figure~\ref{fig:infix2}).
Therefore, $p > i$.
We can show $q < j$ in a symmetric way.
\end{proof}

By Lemmas~\ref{lemma:infix1} and~\ref{lemma:infix2},
when a palindromic infix $w$ of $T[i.. j]$ is the longest palindromic substring of $T[i.. j]$,
the center of $w$ must be located between $\frac{i+t}{2}$ and $\frac{s+j}{2}$.
Furthermore, $w$ is a maximal palindrome in $T$.
In other words, $w$ is the longest maximal palindrome in $T$
whose center $c$ satisfies $\frac{i+t}{2} < c < \frac{s+j}{2}$.
To find such a (maximal) palindrome,
we build a succinct RMQ data structure on the length-$(2n-1)$ array $\MP$
that stores the lengths of maximal palindromes in $T$.
For each integer and half-integer $c \in \{1, 1.5, \ldots, n-0.5, n\}$, $\MP[2c-1]$ stores the length of the maximal palindrome whose center is $c$.
By doing so, when the indices $t$ and $s$ are given,
we can find a candidate for the longest palindromic substring
which is an infix of $T[i..j]$ in constant time.
More precisely, the length of the candidate is $\MP[\RMQ_{\MP}(i+t, s+j-2)]$ since the center $c$ of the candidate satisfies $\frac{i+t}{2}<c<\frac{s+j}{2}$ ($i+t-1<2c-1<s+j-1$).
By Manacher's algorithm~\cite{manacher1975new},  $\MP$ can be constructed in $\order(n)$ time.
Then, we obtain the following lemma.

\begin{lemma}
\label{lemma:infix}
  Given the longest palindromic prefix $T[i..t]$ and the longest palindromic suffix $T[s..j]$ of $T[i..j]$,
  we can compute the longest palindromic infix of $T[i..j]$ whose centers are between the centers of $T[i..t]$ and $T[s..j]$
  in $\order(1)$ time with a data structure of size $\order(n)$ that can be constructed in $\order(n)$ time.
\end{lemma}

By Lemmas~\ref{lemma:suffix} and~\ref{lemma:infix}, we have shown our main theorem:
\maintheorem*

\section{Top-$k$ longest palindromes}\label{sec:topk}

We denote by $\TopLPal_T([i, j], k)$ an array of occurrences of top-$k$ longest palindromic substrings in $T[i.. j]$ sorted in their lengths.
In other words,
$\TopLPal_T([i, j], k)[r] = [s, t]$ means that $T[s.. t]$ is the $r$-th longest palindromic substring in $T[i..j]$.
For simplicity, we denote $\TopLPal_T([1, |T|], k)$ as $\TopLPal_T(k)$.

\subsection{Top-$k$ longest palindromes in a string}

First, we consider a problem to compute top-$k$ longest palindromes in the input string and propose an efficient algorithm.
\begin{itembox}[l]{\bf Top-$k$ longest palindromes problem}
  {\bf Input:} A string $T$ of length $n$ and an integer $k$.\\
  {\bf Output:} An array $\TopLPal_T(k)$.
\end{itembox}

Firstly, we give an important observation of this problem.
For a palindromic substring $P = T[\alpha.. \beta]$,
we call the substring $T[\alpha+1.. \beta-1]$ the \emph{shrink} of $P$.
Note that the shrink of a palindrome is also a palindrome.
\begin{observation} \label{obs:topkpal}
  The $r$-th longest palindrome in a string $T$ is either
  \begin{itemize}
    \item[(i)] a maximal palindrome in $T$ or
    \item[(ii)] the shrink of the $q$-th longest palindrome for some $q$ with $1 \le q \le r-1$.
  \end{itemize}
\end{observation}

In our algorithm, we precompute array $M[1.. 2n-1]$
and dynamically maintain array $R[1.. n]$, where
each $M[p]$ stores the $p$-th longest maximal palindromes in $T$, and
each $R[\ell]$ stores the set of palindromes of length $\ell$ that are already returned and whose palindromic substring is not returned yet
(if there is no such palindrome, let $R[\ell]=\nil$).

Note that the sorted array $M$ can be computed in $\order(n)$ time by using Manacher's algorithm and radix sorting.
Also, since the longest palindrome is $M[1]$,
we first return $M[1] = [s, t]$ and then update $R[t-s+1]$ to singleton $\{[s, t]\}$.

When we compute the $r$-th longest palindrome for some $r$ with $2 \le r \le k$,
we utilize Observation~\ref{obs:topkpal}.
Let $\ell_{r-1}$ be the length of the $(r-1)$-th longest palindrome.
Then the length $\ell_r$ of the $r$-th longest palindrome is in $\{\ell_{r-1}, \ell_{r-1}-1, \ell_{r-1}-2\}$
because the shrink of the $(r-1)$-th longest palindrome, whose length is  $\ell_{r-1}-2$, is at least a palindrome.
We pick up a longest palindrome $Q$ within $R[\ell_{r-1}+2]\cup R[\ell_{r-1}+1]\cup R[\ell_{r-1}]$.
Note that such $Q$ always exists since $R[\ell_{r-1}] \neq \nil$ at this step.
Let $Q^-$ be the shrink of $Q$.
We compare its length $|Q^-|$ with the length of the longest maximal palindrome that has not returned yet.
Then the longer one is the $r$-th longest palindrome $P_r$.
If $P_r$ is equal to $Q^-$, we remove $Q$ from $R[|Q|]$.
At last, we append $P_r$ to the set $R[|P_r|]$.

Since every operation in each $r$-th step can be done in constant time,
the above algorithm runs in $\order(n+k)$ time.
Also, since we remove $Q$ from $R$ when we return $Q^-$,
most entries of $R$ are $\nil$ except for at most three entries at each step.
Thus, array $R$ can be implemented within $\min\{3n, k\}$ words of space.
In total, our algorithm requires $\order(n)$ working space.
We have shown the next lemma.
\begin{lemma}\label{lem:topkManacher}
  Given a string $T$ of length $n$ and an integer $k$,
  we can compute $\TopLPal_T(k)$
  in $\order(n+k)$ time with $\order(n)$ working space.
\end{lemma}

We further show that the above algorithm can be applied to a query version of the top-$k$ longest palindromes problem defined as below:
\begin{itembox}[l]{\bf Top-$k$ longest palindromes query}
  {\bf Input:} A string $T$ of length $n$.\\
  {\bf Query input:} An integer $k$.\\
  {\bf Query output:} An array $\TopLPal_T(k)$.
\end{itembox}

In the preprocessing phase, we compute and store the sorted top-$n$ longest palindromes
by using the algorithm of Lemma~\ref{lem:topkManacher}.
If $k \le n$, we just scan the pre-stored palindromes and return the top-$k$ ones.
Otherwise, we apply the aforementioned algorithm.
For both cases, the query time is $\order(k)$, which is optimal.
Thus the following theorem holds.

\begin{theorem}
  After $\order(n)$-time preprocessing on an input string $T$ of length $n$,
  we can compute $\TopLPal_T(k)$ in $\order(k)$ time
  for a given query integer $k$.
\end{theorem}

\subsection{Internal Top-$k$ longest palindromes query}

This subsection considers a more general model; the internal query model.
\begin{itembox}[l]{\bf Internal Top-$k$ longest palindromes query}
  {\bf Input:} A string $T$ of length $n$.\\
  {\bf Query input:} An interval $[i, j]$ and an integer $k$.\\
  {\bf Query output:} An array $\TopLPal_T([i,j], k)$.
\end{itembox}

First, we give some observations and the idea of our algorithm.
The following observation on palindromic symmetry is fundamental.
\begin{observation}\label{observation:mirror}
  Let $T[\alpha..\beta]$ be a palindrome and let $c$ be an (half-) integer with $\alpha \leq c < \frac{\alpha+\beta}{2}$.
  There is a palindromic substring of $T[\alpha..\beta]$ whose center is $c$
  iff
  there is a palindromic substring of $T[\alpha..\beta]$ whose center is $\alpha+\beta-c$.
\end{observation}

Similar to Observation~\ref{obs:topkpal}, we categorize the $r$-th longest palindrome.
Let $\LPP_{i,j}$ (resp. $\LPS_{i,j}$) be the longest palindromic prefix (resp. suffix) of $T[i.. j]$.
Further let $c_p$ (resp. $c_s$) be the center of $\LPP_{i,j}$ (resp. $\LPS_{i.j}$).
\begin{observation}\label{obs:cand}
  The $r$-th longest palindrome in a string $T[i.. j]$ is one of the followings:
  \begin{itemize}
    \item[(i)] a maximal palindrome of $T$ whose center is between $c_p+0.5$ and $c_s-0.5$, inclusive,
    \item[(ii)] the shrink of the $q$-th longest palindrome whose center is between $c_p$ and $c_s$, inclusive, for some $q$ with $1 \le q \le r-1$,
    \item[(iii)] the longest palindromic prefix $\LPP_{i,j}$ of $T[i..j]$,
    \item[(iv)]  the longest palindromic suffix $\LPS_{i,j}$ of $T[i..j]$,
    \item[(v)]  a palindromic substring of $T[i.. j]$ whose center is less than $c_p$, which is shorter than $\LPP_{i,j}$, or
    \item[(vi)] a palindromic substring of $T[i.. j]$ whose center is greater than $c_s$, which is shorter than $\LPS_{i,j}$.
  \end{itemize}
\end{observation}
Note that
the candidates (iii) and (iv) are not necessarily maximal palindromes of $T$,
and thus, we cannot merge them with (i) in general.
Also, the candidates (v) and (vi) are derived by Observation~\ref{observation:mirror}.
See also Figure~\ref{fig:classify}.
\begin{figure}[ht]
    \centering
    \includegraphics[width=\linewidth]{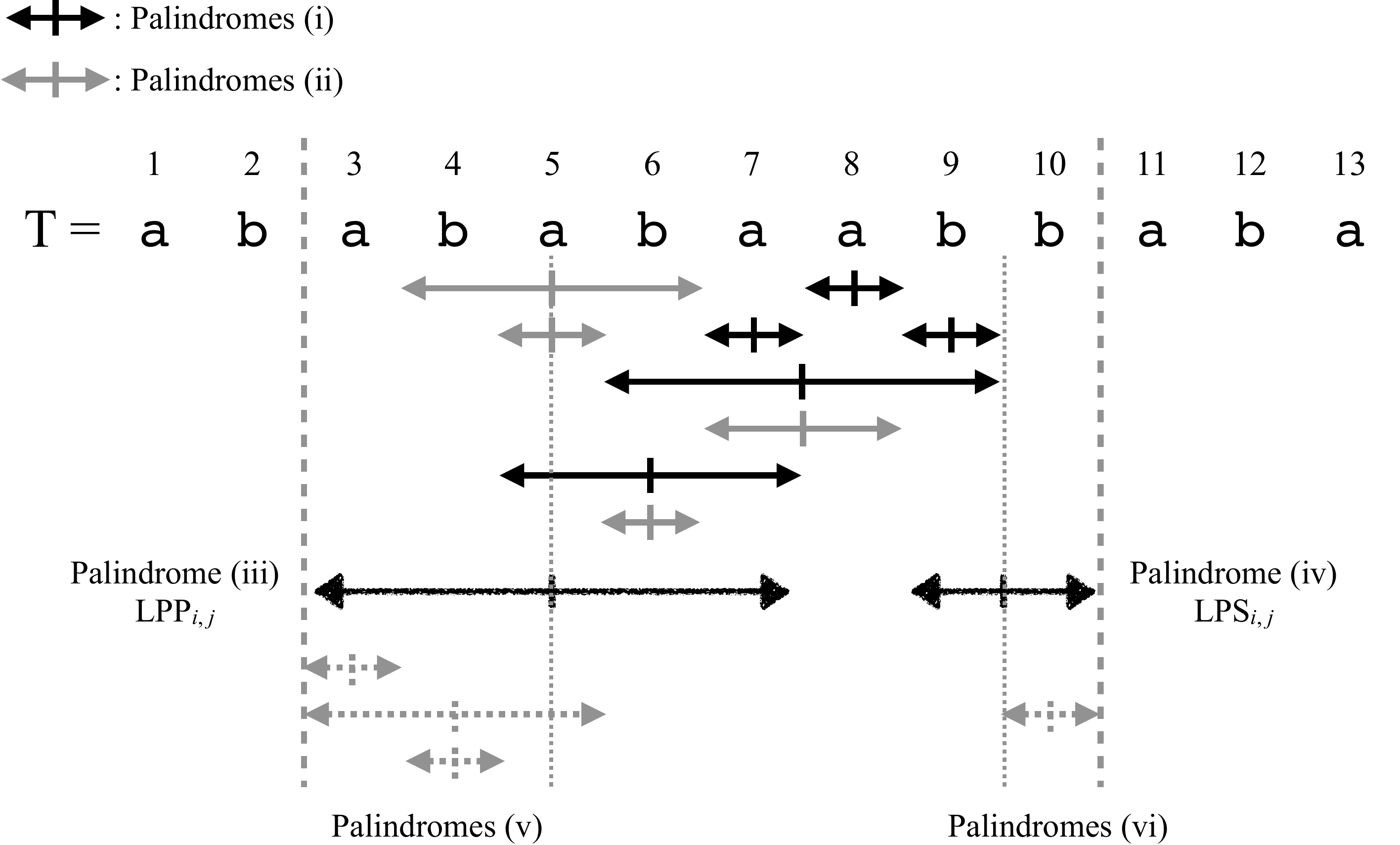}
    \caption{
      Illustration for Observation~\ref{obs:cand}.
      All non-empty palindromic substrings of $T[3.. 10]$ are depicted, and they are categorized into (i)--(vi).
    }
    \label{fig:classify}
\end{figure}

The first and the second candidates (i), (ii) are the same as those of Observation~\ref{obs:topkpal}.
To compute these candidates, we modify the algorithm of Lemma~\ref{lem:topkManacher} and apply it to this problem.
Instead of the sorted array $M[1.. 2n-1]$, we simply use array $\MP[1.. 2n-1]$ of the lengths of maximal palindromes in the positional order.
In the preprocessing, we construct a \emph{top-$k$ Range Maximum Query} (top-$k$ RMQ) data structure on array $\MP$.
The top-$k$ RMQ (a.k.a. sorted range selection query) on an integer array $A[1.. n]$ is, 
given an interval $[i, j] \subseteq [1, n]$ and a positive integer $k \le j-i+1$ as a query,
to output a sorted list of top-$k$ largest elements in subarray $A[i.. j]$.
As for the top-$k$ RMQ, the next result is known:
\begin{theorem}[\cite{brodal2009online}] \label{thm:topkRMQ}
  There is a data structure of size $\order(n)$ which can answer top-$k$ RMQ in $\order(k)$ time for any $k$.
  Also, the data structure can be constructed in $\order(n\log n)$ time.
\end{theorem}

After constructing a {top-$k$ RMQ} data structure on array $\MP$,
we can enumerate top-$k$ longest maximal palindromes of $T$ whose center is between $c_p+0.5$ and $c_s-0.5$ (i.e., candidates (i)).
The second candidate (ii) can be maintained dynamically
by using array $\tilde{R}$ of returned palindromes
which is almost the same as $R$
but contains only palindromes centered between $c_p$ and $c_s$.
The third and fourth candidates (iii) and (iv) are unique.
Thus we can easily treat them.
The fifth and sixth candidates (v), (vi) can be found by palindromic symmetry of $\LPP$ and $\LPS$.
\paragraph*{\bf Algorithm}
Now, we are ready to describe our algorithm.
Given a query interval $[i, j]$ and a query integer $k$,
we first run the algorithm of Theorem~\ref{theorem:conclusion}.
Then, the longest palindromic substring $P_1$ in $T[i.. j]$ and
the longest palindromic prefix/suffix
(equivalently, $c_p$ and $c_s$) of $T[i.. j]$ are obtained.
We set $\tilde{R}[\ell] = \{[s, s+\ell-1]\}$ where $s$ and $\ell$ are the starting position and the length of the longest palindromic substring of $T[i..j]$, respectively.
The second-longest palindrome $P_2$ is the longest one in
$(\{\MP[x]\mid 2c_p < x < 2c_s\} \cup \{T[\alpha+1.. \beta-1]\mid [\alpha, \beta] \in \tilde{R}\}\cup \{\LPP_{i,j}, \LPS_{i,j}\})\setminus \mathcal{R}$
where $\mathcal{R}$ is the set of palindromes that have been returned.
Before detecting $P_2$, $\mathcal{R} = \{[s, s+\ell-1]\}$ holds, and after detecting $P_2$, we update $\mathcal{R} \leftarrow \mathcal{R}\cup\{P_2\}$.
In addition, if $P_2$ is a substring of $\LPP_{i,j}$ and the center of $P_2$ is greater than $c_p$, then
there is the same palindrome as $P_2$ in the opposite position w.r.t. $c_p$.
Thus, we add the (third-longest) palindrome into $\mathcal{R}$ and continue the procedure.
In the case where $P_2$ is a substring of $\LPS_{i,j}$ with a different center from $\LPS_{i,j}$, the same symmetric procedure is applied.
We iterate the above procedure until $k$-th longest palindrome is obtained.
\begin{example}
  We give a running example using Figure~\ref{fig:classify}.
  Assume that we want to find the top-$5$ longest palindromes in substring $T[3.. 10] = \mathtt{ababaabb}$ of string $T = \mathtt{abababaabbaba}$.
  Palindromes in $T[3.. 10]$ are categorized as (i)--(vi).
  Here, the longest maximal palindrome whose center $c$ is $5 < c < 9.5$ is $T[6.. 9]$.
  Also, $\LPP_{3, 10} = T[3.. 7]$ and $\LPS_{3, 10} = T[9.. 10]$ hold.
  Thus, the longest one in $\TopLPal_T([3, 10], 5)$ is $T[3.. 7]$.
  Then $\tilde{R}[5] = \{[3, 7]\}$.
  Thus the second-longest palindrome is either
  $T[6.. 9]$ (a maximal palindrome), $T[9.. 10] = \LPS_{3,10}$, or $T[4..6]$ (, which is the shrink of $T[3.. 7] \in \tilde{R}[5]$).
  Thus we return $T[6.. 9]$ and update $\tilde{R}[4] = \{[6, 9]\}$.
  The second-longest maximal palindrome whose center $c$ is $5 < c < 9.5$ is $T[5.. 7]$.
  Thus the third-longest palindrome is either
  $T[5.. 7]$ (a maximal palindrome), $T[9.. 10] = \LPS_{3,10}$, or $T[4.. 6]$.
  Since there are two longest ones $T[5.. 7]$ and $T[4.. 6]$ with the same length $3$, we return them.
  Also, we remove $[3, 7]$ from $\LPS_{i,j}$ since its shrink $T[4.. 6]$ has been returned.
  Finally, since $T[5.. 7]$ is a palindromic substring of $\LPP_{3,10}$,
  there is a palindrome $T[3..5]$ of length $3$ at the mirror position by Observation~\ref{observation:mirror}.
  So we also return $T[3..5]$ and update $\tilde{R}[3] = \{[5, 7], [4, 6], [3, 5]\}$.
  Then we have returned the top-$5$ palindromes $\mathcal{R} = \{[3,7], [6,9], [5,7], [4,6], [3,5]\}$,
  so the algorithm terminates.
\end{example}

\paragraph*{\bf Analyzing Algorithm}
At each iteration, the longest palindrome in \linebreak
$\{\MP[x]\mid 2c_p < x < 2c_s\} \setminus \mathcal{R}$ can be computed in constant time by answering top-$k$ RMQ in parallel.
The longest palindrome in $\{T[\alpha+1.. \beta-1]\mid [\alpha, \beta] \in \tilde{R}\}\setminus \mathcal{R}$ can be also computed in constant time
since elements in $\tilde{R}$ are sorted by length (cf.~the algorithm of Lemma~\ref{lem:topkManacher}).
Trivially, the longest one in $\{\LPP_{i,j}, \LPS_{i,j}\})\setminus \mathcal{R}$ can be found in constant time.
By exploring the longest elements of the above three sets, the first four candidates in Observation~\ref{obs:topkpal} have been checked.
By the remaining process, the existence of candidates (v) and (vi) in Observation~\ref{obs:topkpal} has also been checked.
Therefore, the proposed algorithm runs correctly, and its time complexity is dominated by the query time for top-$k$ RMQ.
We obtain the next theorem.
\begin{theorem}\label{theorem:conclusion2}
  Given a string $T$ of length $n$ over a linearly sortable alphabet,
  we can construct a data structure of size $\order(n + \pi_s(n))$ in $\order(n + \pi_p(n))$ time
  that can answer any 
  internal top-$k$ longest palindrome query in $\order(k + \pi_q(n, i, j, k))$ time 
  where
  $\pi_p(n)$ is the preprocessing time for top-$k$ RMQ,
  $\pi_s(n)$ is the size of top-$k$ RMQ data structure, and 
  $\pi_q(n, i, j, k)$ is the query time for top-$k$ RMQ.
\end{theorem}
We obtain the next corollary by plugging the result of Theorem~\ref{thm:topkRMQ} into Theorem~\ref{theorem:conclusion2}.
\maincoro*

\section{Conclusions and Open Problems}\label{sec:conc}
In this paper, we considered three variants of the longest palindrome problem
on the input string $T$ of length $n$
and proposed algorithms for them.
The problems are the followings.
\begin{enumerate}
  \item {\bf The internal longest palindrome query},
    which requires returning the longest palindrome appearing in substring $T[i.. j]$.
  \item {\bf The top-$k$ longest palindrome query},
    which requires returning the top-$k$ longest palindrome appearing in $T$.
  \item {\bf The internal top-$k$ longest palindrome query},
    which requires returning the top-$k$ longest palindrome appearing in substring $T[i.. j]$.
\end{enumerate}
Note that every problem is a generalization of the longest palindrome problem,
which can be solved in $\order(n)$ time~\cite{manacher1975new}.
Our proposed data structures are of size $\order(n)$ and can answer every query in optimal time,
i.e., in $\order(1)$ time for the internal longest palindrome query
and in $\order(k)$ time for the top-$k$ queries.
Construction time is $\order(n)$ for the first and the second problem,
and $\order(n \log n)$ time for the third problem.
Note that this $\order(n \log n)$ term is dominated by the preprocessing time for the top-$k$ RMQ~\cite{brodal2009online}.

\paragraph*{Open Problems}
Our results achieved optimal time in terms of order notations.
It will be an interesting open problem to develop a time-space tradeoff algorithm for variants of the longest palindrome problem.
For example, for some parameter $\tau > 1$,
can we design a data structure (except for the input string) of size $\order(n/\tau)$
which can answer the internal longest palindrome query in $\order(\tau)$ time?
One of the other possible directions to reduce space is,
designing a data structure of size $\order(d)$ where $d$ is the number of distinct palindromes occurring in $T$.
It is known that $d$ is at most $n+1$~\cite{droubay2001episturmian} and thus $\order(d) = \order(n)$ in the worst case.
However, in most cases, $d$ is much smaller than $n$.
So it is worthwhile to design such data structures.
For example, the size of the palindromic tree is actually $\order(d)$ rather than $\order(n)$.
Whether the space usage of our data structure can be reduced to $\order(d)$ is open.

\section*{Acknowledgements}
This work was partially supported by JSPS KAKENHI Grant Numbers
JP20H05964~(TH), JP21H05839~(KS), JP22K21273, and JP23H04381~(TM).

\bibliographystyle{plain}
\bibliography{ref}
\end{document}